\begin{document}

%
\title{How likely is a random graph shift-enabled?}
%
%
%

\author{Liyan~Chen,
Samuel~Cheng$^1$,~\IEEEmembership{Senior~Member,~IEEE,}
                Vladimir~Stankovic,~\IEEEmembership{Senior~Member,~IEEE,}
        and~Lina~Stankovic,~\IEEEmembership{Senior~Member,~IEEE}
\thanks{L. Chen is with 
Key Laboratory of Oceanographic Big Data Mining \& Application of Zhejiang Province, Zhejiang Ocean University, Zhoushan, Zhejiang 316022, China
and the Department
of Computer Science and Technology, Tongji University, Shanghai,
 201804 China  (e-mail: chenliyan@tongji.edu.cn).}
\thanks{S. Cheng is with
the School of Electrical and Computer Engineering, University of Oklahoma, OK 74105, USA (email: samuel.cheng@ou.edu).}
\thanks{V. Stankovic and L. Stankovic are with Department of Electronic and Electrical Engineering,
         University of Strathclyde, Glasgow, G1 1XW U.K.
        (e-mail:\{vladimir.stankovic,~lina.stankovic\}@strath.ac.uk).}
\thanks{$^1$ Corresponding author.}
\thanks{
}
}

%
%

\markboth{}%
{Shell \MakeLowercase{\textit{et al.}}: Bare Demo of IEEEtran.cls for IEEE Journals}
%



\maketitle

\begin{abstract}
The shift-enabled property of an underlying graph is essential in designing distributed filters. This article discusses when a random graph is shift-enabled. In particular, popular graph models Erdős–Rényi (ER), Watts–Strogatz (WS), Barabási–Albert (BA) random graph are used, weighted and unweighted, as well as signed graphs. Our results show that the considered unweighted connected random graphs are shift-enabled with high probability when the number of edges is moderately high. However, very dense graphs, as well as fully connected graphs, are not shift-enabled. Interestingly, this behaviour is not observed for weighted connected graphs, which are always shift-enabled unless the number of edges in the graph is very low. 

\end{abstract} %

\begin{IEEEkeywords}
graph signal processing, shift-enabled graphs, undirected graph, random graph.
\end{IEEEkeywords}

%
\IEEEpeerreviewmaketitle

\newtheorem{myDef}{Definition}
\newtheorem{Thm}{Theorem}
\newtheorem{Rem}{Remark}
\newtheorem{Lem}{Lemma}
\newtheorem{Cor}{Corollary}

\section{Introduction}
\label{sec:intro}
Graph signal processing (GSP) extends classical digital signal processing to signals on graphs and provides a promising solution to numerous real-world problems that involve data defined on topologically complicated domains~\cite{Liyanchen2018}. For large graphs, graph signals need to be processed in a distributed rather than centralized manner \cite{sandryhaila_2014_big_data}. That is, a graph node may only have access to graph signals acquired by nodes in its physical proximity. Furthermore, for large graphs with millions of nodes, a centralised implementation of
the graph filter \cite{Shuman_2013_The_emerging_field}, \cite{ortega2018graph} through direct matrix multiplication is computationally intractable~\cite{segarra2017optimal,coutino2019advances}.
Thus to make the graph filtering feasible, it is necessary to perform
the filtering operation locally \cite{sandryhaila_2014_big_data}. 
For practical design purposes,
it is necessary to be in a position to decompose graph filters in
a form of polynomial of a shift matrix, of graph shift operator, $S$, that uniquely defines graph topology
(for example, graph adjacency or Laplacian matrix)
\cite{sandryhaila_2013_discrete}, \cite{segarra2017network}. 
However, not all graph filters can be represented as polynomials of the shift matrix\footnote{The importance of this polynomial representation has been reiterated in a recent survey paper \cite{ortega2018graph}.}.



Given a graph, necessary conditions for a graph filter to be representable as a polynomial of the graph shift matrix is discussed in \cite{Liyanchen2018} and \cite{Liyanchen2021}, where the notion of \emph{shift-enabled} graph is introduced as a graph where any shift-invariant filter $H$ can be represented as a polynomial of the shift matrix.
It is shown in \cite{Liyanchen2018} and \cite{Liyanchen2021} that the shift-enabled condition \cite{Liyanchen2018} is important for both directed and undirected graphs, and hence it needs to be taken into account.  

This paper focuses on finding a likelihood for a random graph to be shift enabled. 
This problem has received relatively little attention in the research community, since most researchers currently assume that the shift-enabled
condition simply holds or ignore the condition completely.
To illustrate ``how likely is a graph shift-enabled'',
we discuss the probability that some classic random graphs are shift-enabled. 
In particular, the main contribution of this paper is characterising the bahaviour of the probability that:
\begin{itemize}
	\item an unweighted random Erdős–Rényi (ER) graph, Watts–Strogatz (WS) graph and Barabási–Albert (BA) graph is shift-enabled as a function of graph parameters;
	\item the above three weighted random graphs are shift-enabled, where the weights are generated based on exponential and Gaussian distribution;
	\item a random signed graph is shift-enabled; 
	\item the analysis of the above results.
\end{itemize}
Our results show that the considered unweighted connected random graphs are shift-enabled with high probability when the number of edges is moderately high. However, very dense graphs, as well as fully connected graphs, are not shift-enabled. Interestingly, this behaviour is not observed for weighted connected graphs, which are always shift-enabled unless the number of edges in the graph is very low.

The outline of the paper is as follows. Section
\ref{sect:basicconcepts} describes the basic concepts and fundamental properties of a shift-enabled graph. Section~\ref{sec:method} provides the main results of the paper for unweighted graphs, that is, the characterisation of the behaviour of the probability that a graph is shift enabled. Section~\ref{sec:weights} extends the results to weighted and signed graphs.
Section~\ref{sec:discussion} concludes the paper.


\section{Basic concepts and properties of shift-enabled graphs}
\label{sect:basicconcepts}
In this section, we introduce our notation and briefly review the concepts of shift-enabled graphs and their properties relevant to this article. For more details, see \cite{sandryhaila_2014_big_data,sandryhaila_2014_discrete_frequency, Shuman_2013_The_emerging_field,sandryhaila_2013_discrete}.


Let $\mathcal{G}=(V,E,W)$ be a graph, where $V=\{v_1,v_2,\cdots,v_{N}\}$ is the vertex set, and $E\subset\{1,\cdots,N\}\times\{1,\cdots, N\}$ is the edge set in which $(i,j)\in E$ if vertex $v_i$  and vertex $v_j$ have a link. $W=(w)_{i,j=1}^N$ is the weighted adjacency matrix, in which  $w_{i,j}$ represents the weight of the edge $(i,j)\in E$.
 Throughout this article, a graph $\mathcal{G}$ is assumed to be simple, i.e. a finite, graph without loops and/or multiple edges. 
Let $D=diag(D_1,\cdots,D_n)$, with $D_i=\sum_{j=1}^{N}w_{i,j}$, be the degree matrix of $\mathcal{G}$\cite{poignard2018spectra}.


The graph is shift-enabled if its shift matrix $S$ (see Remark 2 below) complies with the following definition.


\begin{myDef}[Shift-enabled graph {\cite{Liyanchen2018,sandryhaila_2013_discrete}}]
	\label{def:shift_enable}
	A graph ${\mathcal G}$ is shift-enabled if its corresponding shift matrix $S$ satisfies $p_S(\lambda) =m_S(\lambda)$, where $p_S(\lambda)$ and $m_S(\lambda)$ are the minimum polynomial and the characteristic polynomials of $S$, respectively. We also say that $S$ is shift-enabled when the above condition is satisfied. Otherwise, $S$ and the corresponding graph, are non-shift-enabled.
\end{myDef}

\begin{Rem}
The shift-enabled condition ($p_S(\lambda) =m_S(\lambda)$) is equivalent to the condition that each Jordan block of the Jordan normal form of the shift matrix has a distinct eigenvalue (see Proposition 6.6.2 in \cite{lancaster_1985_matrix_theory}).
Consequently, for real symmetric matrices, the above condition naturally degenerates to the condition that all eigenvalues have to be distinct (see details in Lemma  ~\ref{Lem_undirected_shift}).
\end{Rem}

\begin{Rem}
The graph adjacency matrix $A$, 
Laplacian $L=D-W$,  the normalized Laplacian matrix $\mathcal{L}=L^{-1/2} D L^{-1/2}$, the signless Laplacian matrix $L^{+}=D+W$  and the probability transition matrix $T=D^{-1}W$ are generally chosen as the graph shift operator or graph shift matrix~\cite{sandryhaila_2014_big_data, sandryhaila_2014_discrete_frequency,Shuman_2013_The_emerging_field,Marques_2017}. Here,  we use $S$ to denote the general
shift matrix 
and select Laplacian matrix as the shift matrix for the specific discussion in Section~\ref{sec:method} and Section~\ref{sec:weights}, since
Laplacian matrix is one of most popular shift matrix~\cite{Marques_2017,dong2020graph,Dittrich20200signedgraph}. 
Most of the conclusions in this paper, however, apply to other shift matrices (see Figure~\ref{fig:different shift matrix}).
\end{Rem}





For shift-enabled graphs, we have the following result.
\begin{Thm}{\label{Thm_shift_commute}}
	The shift matrix $S$ is shift-enabled if and only if every matrix $H$ commuting with $S$ is a polynomial in $S$\rm~{\cite{sandryhaila_2013_discrete}}.
\end{Thm}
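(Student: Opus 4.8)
The plan is to reduce both implications to the well-known fact, recalled in Remark~1, that the shift-enabled condition $p_S(\lambda)=m_S(\lambda)$ is equivalent to $S$ being \emph{non-derogatory}: each eigenvalue occupies a single Jordan block, or equivalently $\deg p_S = N$ (the minimal and characteristic polynomials coincide), or equivalently $S$ admits a cyclic vector. The bridge to the statement is the observation that the set $\mathbb{C}[S]$ of polynomials in $S$ is always contained in the commutant $\mathcal{C}(S)=\{H : HS=SH\}$, and that $\dim\mathbb{C}[S]=\deg p_S$, since $I,S,\dots,S^{\deg p_S-1}$ are linearly independent and every higher power reduces modulo $p_S$. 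Hence the claim ``every $H$ commuting with $S$ is a polynomial in $S$'' is exactly the assertion $\mathcal{C}(S)=\mathbb{C}[S]$, which, given the inclusion, amounts to the dimensional equality $\dim\mathcal{C}(S)=\deg p_S$.

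For the forward direction (shift-enabled $\Rightarrow$ commutant is $\mathbb{C}[S]$) I would use a cyclic vector. Since $S$ is non-derogatory it has a vector $v$ for which $v,Sv,\dots,S^{N-1}v$ form a basis of $\mathbb{C}^N$. Given any $H$ with $HS=SH$, expand $Hv$ in this basis to obtain $Hv=p(S)v$ for some polynomial $p$. Then for each basis element, $HS^jv=S^jHv=S^jp(S)v=p(S)S^jv$, so $H$ and $p(S)$ agree on a basis and therefore $H=p(S)$. This direction is short once the cyclic vector is in hand.

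For the converse I would argue by contraposition: assuming $S$ is derogatory ($\deg p_S<N$), I must exhibit a commuting matrix that is not a polynomial in $S$. The cleanest route is a dimension count. Passing to the Jordan form and working within a single generalized eigenspace whose block sizes are $d_1\ge d_2\ge\cdots$, the commuting block matrices have dimension $\sum_{i,j}\min(d_i,d_j)$, which always totals at least $N$ and strictly exceeds $N$ as soon as some eigenvalue carries two or more blocks. Consequently $\dim\mathcal{C}(S)>N\ge\deg p_S=\dim\mathbb{C}[S]$, the inclusion $\mathbb{C}[S]\subseteq\mathcal{C}(S)$ is strict, and any $H\in\mathcal{C}(S)\setminus\mathbb{C}[S]$ witnesses the failure. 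An equivalent but more explicit alternative would be to build $H$ as a projection onto, or an interchange between, two repeated Jordan blocks sharing an eigenvalue, and to check it commutes with $S$ yet cannot equal any $p(S)$.

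The main obstacle is the converse, specifically justifying the centralizer-dimension formula $\dim\mathcal{C}(S)=\sum_{i,j}\min(d_i,d_j)$, or at least the inequality $\dim\mathcal{C}(S)\ge N$ with strictness in the derogatory case. This requires a block-by-block analysis of matrices commuting with a nilpotent Jordan block, i.e. identifying the Toeplitz upper-triangular structure of each commuting sub-block and counting its free parameters. If one prefers to avoid this computation, the explicit construction of a block-interchange $H$ in the repeated-eigenvalue case sidesteps the general formula, at the cost of verifying by hand that the chosen $H$ lies outside $\mathbb{C}[S]$ (for instance because $\mathbb{C}[S]$ acts as a scalar on each block while $H$ does not).
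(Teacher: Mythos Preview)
Your argument is correct and is essentially the classical proof that a matrix is non-derogatory if and only if its commutant equals $\mathbb{C}[S]$; the cyclic-vector step for the forward direction and either the centralizer-dimension count or the explicit block-swap for the converse are the standard textbook routes (cf.\ Lancaster--Tismenetsky, which the paper already cites for Remark~1).

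There is, however, nothing to compare at the level of strategy: the paper does not prove Theorem~\ref{Thm_shift_commute} at all. It states the result and defers entirely to the citation \cite{sandryhaila_2013_discrete}. So your proposal supplies strictly more than what the paper contains for this statement.
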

A graph filter $H$ is linear shift-invariant (LSI) if $H$ commutes with shift matrix $S$  ($HS=SH$).
That is, the shifted and filtered operations are commuted, i.e., the shifted filtered output is the same as filtered output of a shifted input. 
Theorem~\ref{Thm_shift_commute} implies that 
an LSI filter naively designed cannot always be represented as a polynomial of shift
operators as is the case in classical DSP; that is, as mentioned in Section~\ref{sec:intro}, whether the graph is directed or undirected, the shift-enabled condition is important. Thus, it is interesting to investigate ``how likely is a graph shift-enabled?''. The next section gives the answers for commonly used random graphs.

\section{Unweighted random graphs}
\label{sec:method}
In this section we focus on classic random unweighted graphs, namely,
we will consider ER, WS (small world model), and BA graphs (scale-free model), and calculate probability $p$ that the random graph is shift enabled. We examine how $p$ depends on the parameters used to generate the graph.

\subsection{Generic Random Graphs Models}
\begin{figure}[htb]
    \centering
    \includegraphics[width=1\linewidth]{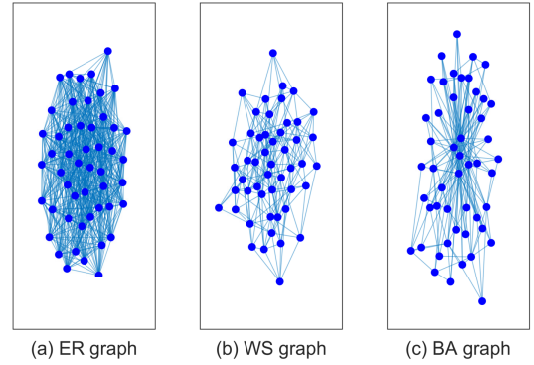}
    \caption{Examples of generic random graph models with $N$=50 nodes. (a) ER graph with the probability of generating edges $P=0.5$. 
    ER random graph has no central node, and most nodes are evenly connected.
    (b) WS graph with average degree $K=4$ and rewriting probability $\beta=0.5$. WS graph has small-world properties, including short average path lengths and high clustering. (c) BA graph with the initial number of nodes $m_0=5$ and the number of edges for each node addition $m=4$.
    BA graph is a scale-free network, in which the connections between nodes are severely unevenly distributed: a few nodes in the network called Hub points have extremely many connections, while most nodes have only a small number of connections.
    }
    \label{fig:ERWSBA}
\end{figure}
\subsubsection{Erdős–Rényi random graph (Figure~\ref{fig:ERWSBA} (a))}
The Erdős–Rényi (ER) random graph is often used to model many real-world inference problems, and it has been used in the context of graph filter design, e.g., in \cite{segarra2017network}, \cite{mei2016signal} and \cite{nassif2018distributed}. The ER graph topology can be defined in two ways~\cite{frieze2015introduction} as:
\begin{itemize}
	\item 
	$\mathcal{G}(N,P)$ where $N$ and $P$ denote  the number of nodes and the probability that an edge is present, respectively.
	\item
	$\mathcal{G}(N,M)$ where $M$ is the number of edges, which are randomly distributed in the graph.
\end{itemize}
The relationship between the two models
is $P=M/\binom{N}{2}$, 
and the two models are asymptotic equivalent as $N$ increase (Theorem 1.4 in Ref.~\cite{frieze2015introduction}).	
\subsubsection{Watts-Strogatz random graph (Figure~\ref{fig:ERWSBA} (b))}
The Watts-Strogatz (WS) model is a classic random graph generation model which produces graphs with small-world properties \cite{Watts1998Collective}.
WS graph has three parameters, $N$, $K$ and $\beta$, and is denoted as $\mathcal{G}=WS(N, K,\beta)$ where $N$ is the number of nodes, $K$ is the average degree of nodes, and $\beta$ is the rewriting probability. If $\beta=0$ WS graph is a regular ring lattice in which each node is connected to the nearest $K$ nodes, and if $\beta=1$, the WS graph becomes an ER graph.

\subsubsection{Barabási–Albert random graph (Figure~\ref{fig:ERWSBA} (c))}
The Barabási–Albert (BA) model is an algorithm for generating random scale-free networks using a preferential attachment mechanism. It can model many practical scale-free networks 
such as the World Wide Web, citation networks and social networks. In contrast to the ER and the WS models, the BA model is scale-free, and the connections between the nodes are severely unevenly distributed: a few nodes in the network, called Hub points, have extremely many connections, while most other nodes have only a small number of connections. The
BA random model has three parameters: the number of nodes $N$, the initial number of nodes $m_0$, and the number of edges for each node addition $m$, $m\leq m_0$, and is denoted by $\mathcal{G}=BA(N,m_0, m)$. New nodes are added according to the priority strategy $-$ ``the more connections between nodes, the greater the possibility of receiving a new link" ~\cite{barabasi1999emergence,albert2002statistical,barabasi2013network}.

\subsection{
Shift-enabled properties of ER graphs}\label{sec: ER graph}
 
Figure~\ref{fig:the shift-enabled probability of ER graphs} shows the probability $p$ that an ER graph $\mathcal{G}(N,M)$ is shift-enabled as a function of the number of edges $M$. The simulation results are provided for $N$=50 nodes and the results are averaged over $10^5$ runs. From the figure, we can identify three distinct regions: Region 1: a very small $M$ where $p$ is zero; Region 2: the flat region when $p$ reached the maximum close to 1; Region 3, where $p$ drops to 0 for very large $M$, following a symmetric trend to Region 1.

Next, based on these heuristic findings, we separately treat the three regions. We fix the number of nodes $N$ and change the number of edges $M$, and theoretically analyse the behaviour of the probability $p$ that the resulting graph is shift-enabled. 

\begin{figure}[htb]
	\centering
	\includegraphics[width=2.5in]{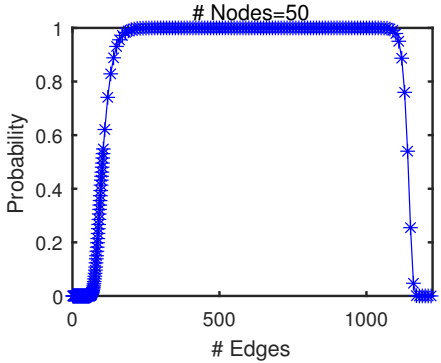}
	\caption{The shift-enabled probability of the unweighted ER graph with $N$=50 nodes: probability that the graph is shift-enabled $p$ vs. the number of edges $M$.}        \label{fig:the shift-enabled probability of ER graphs}
\end{figure}


\subsubsection{Region 1 $-$ $M$ is small}\label{sec: Region 1}

In this region, based on the following theorem, the probability of a graph being shift-enabled is always zero.

\begin{Thm}{\label{Thm_p_tends_to_zero}}
	If a graph $\mathcal{G}$ is unconnected then the shift matrix is not shift-enabled.
\end{Thm}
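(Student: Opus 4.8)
The plan is to work with the Laplacian matrix $L = D - W$ as the shift matrix $S$, as selected in Remark~2, and to exhibit a repeated eigenvalue whenever $\mathcal{G}$ is disconnected. Since $L$ is real and symmetric, the Remark following Definition~\ref{def:shift_enable} (made precise in Lemma~\ref{Lem_undirected_shift}) reduces the shift-enabled condition $p_S(\lambda) = m_S(\lambda)$ to the requirement that all eigenvalues of $L$ be distinct. Hence it suffices to produce a single eigenvalue of algebraic multiplicity at least two.

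First I would locate that repeated eigenvalue at $0$. Suppose $\mathcal{G}$ is disconnected and let $V_1, \ldots, V_k$ (with $k \geq 2$) be its connected components. For each component $V_c$ define the indicator vector $\mathbf{1}_{V_c} \in \mathbb{R}^N$ whose $i$-th entry is $1$ if $v_i \in V_c$ and $0$ otherwise. A direct computation using $L_{ii} = D_i = \sum_j w_{i,j}$ and $L_{ij} = -w_{i,j}$ for $i \ne j$ shows that $L\,\mathbf{1}_{V_c} = 0$: at any row $i \in V_c$ the diagonal term $D_i$ is exactly cancelled by the off-diagonal terms $-w_{i,j}$ summed over $j$ in the same component (edges only exist within a component), while at any row $i \notin V_c$ every relevant weight $w_{i,j}$ vanishes. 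Thus each $\mathbf{1}_{V_c}$ lies in the kernel of $L$.

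Next I would argue these kernel vectors are linearly independent, which is immediate since the $V_c$ are disjoint and their indicator vectors therefore have disjoint supports. Consequently $\dim \ker L \geq k \geq 2$, so the geometric multiplicity of the eigenvalue $0$ is at least $2$. Because $L$ is symmetric it is diagonalizable, so its algebraic and geometric multiplicities coincide, and $0$ is therefore a repeated eigenvalue. By the reduction in the first paragraph this forces $p_S(\lambda) \ne m_S(\lambda)$, i.e.\ $L$ is not shift-enabled, completing the argument.

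I do not anticipate a genuine obstacle here, as the result rests on the classical fact that the multiplicity of the zero Laplacian eigenvalue equals the number of connected components; the only point requiring care is that the argument is specific to the Laplacian (and more generally to any shift operator annihilating the component indicators). For an adjacency-based shift matrix the statement can fail $-$ for instance, the disjoint union of a single vertex and a single edge yields adjacency spectrum $\{-1,0,1\}$, which \emph{is} shift-enabled $-$ so I would flag that the choice $S = L$ of Remark~2 is essential to the theorem as stated.
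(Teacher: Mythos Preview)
Your proof is correct and follows essentially the same approach as the paper: both exhibit $0$ as a repeated eigenvalue of the Laplacian of a disconnected graph and then invoke Lemma~\ref{Lem_undirected_shift}. The paper argues via the block-diagonal structure $L=\mathrm{diag}(L_1,L_2)$ of the Laplacian over two components, whereas you use component indicator vectors directly; your added caveat that the result is specific to $S=L$ (with the adjacency counterexample) is a correct and useful observation not present in the paper.
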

\begin{proof}
	Assume $\mathcal{G}$ has two unconnected components ${\mathcal{G}}_1$ and ${\mathcal{G}}_2$, with corresponding Laplacian matrices $L_1$ and $L_2$, respectively. Since both $L_1$ and $L_2$ have an eigenvalue equal to zero, $L$, the Laplacian matrix of $\mathcal{G}$ will have the eigenvalue 0 with at least the multiplicity of two. Therefore, based on Lemma~\ref{Lem_undirected_shift}, $\mathcal{G}$ is non-shift-enabled.
\end{proof}
Since an $N$-node connected graph has at least  $N-1$ edges, we have the following corollary.
\begin{Cor} {\label{Cor：non shift for small edges}}
	If $M\leq N-2$ then the shift matrix $L$ of graph $\mathcal{G}$ is non-shift-enabled. 
\end{Cor}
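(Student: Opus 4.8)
The plan is to combine the already-established Theorem~\ref{Thm_p_tends_to_zero} with the elementary combinatorial bound relating edge count to connectivity, so that almost no new work is required. First I would recall the standard graph-theoretic fact that any connected graph on $N$ vertices contains a spanning tree, and every spanning tree on $N$ vertices has exactly $N-1$ edges; consequently a connected graph must contain at least $N-1$ edges. This is the only additional ingredient beyond the theorem.

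Next, under the hypothesis $M \leq N-2$, the graph $\mathcal{G}$ has strictly fewer than $N-1$ edges. By the contrapositive of the spanning-tree bound, $\mathcal{G}$ cannot be connected, and is therefore unconnected (disconnected). This reduces the corollary entirely to the setting already covered by Theorem~\ref{Thm_p_tends_to_zero}.

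Finally, I would invoke Theorem~\ref{Thm_p_tends_to_zero} directly: since $\mathcal{G}$ is unconnected, its Laplacian shift matrix $L$ has the eigenvalue $0$ with multiplicity at least two, and hence $L$ is non-shift-enabled. This completes the argument.

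Because Theorem~\ref{Thm_p_tends_to_zero} already performs all of the spectral reasoning, there is no genuine analytic obstacle here; the corollary is a one-line deduction once the edge bound is in place. The only point deserving a moment of care is the tightness of the threshold: the bound stops precisely at $M \leq N-2$ because at $M = N-1$ a spanning tree is connected and may itself be shift-enabled, so $N-2$ is the correct cutoff and the statement cannot be strengthened to $M \leq N-1$.
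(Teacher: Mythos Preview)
Your proof is correct and follows exactly the same line as the paper: the paper simply notes that an $N$-node connected graph has at least $N-1$ edges, so $M\leq N-2$ forces $\mathcal{G}$ to be disconnected, and then Theorem~\ref{Thm_p_tends_to_zero} applies. Your additional remark about tightness at $M=N-1$ is a nice observation but is not part of the paper's argument.
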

According to Corollary~\ref{Cor：non shift for small edges} the probability of a graph being shift-enabled is $p=0$ when $M$ is small with respect to $N$, that is, when $M\leq N-2$.



\subsubsection{Region 2 $-$ Moderate $M$}\label{sec: Region 2}


The following theorem shows that when $N$ is sufficiently large, and the probability that an edge is present $P$ is far from 0 and 1 (alternatively, $M$ is moderately large) then the eigenvalues of Laplacian matrix are distinct. The uniqueness of the eigenvalues guarantees the shift-enable property by Lemma~\ref{Lem_undirected_shift}.
\begin{Thm}[Distinct eigenvalue, Theorem 1.3 in Ref. \cite{tao2017random}]{\label{Thm: simple spectrum}}
	Let $\mathcal{G}$ be a connected graph and $X_N=(x_{i,j})_{1\leq i,j \leq N}$ be an $N\times N$ real symmetric random matrix in which the upper-triangular entries $x_{i,j} (i<j)$ are independent ((see Remark~\ref{remark:upper triangular} (a))
	and have non-trivial distribution  for some fixed $\mu>0$ (see Remark ~\ref{remark:upper triangular} (b)). 
	Furthermore, 
	$x_{i,i}$ are independent of the upper diagonal entries $x_{i,j} (1\leq i<j\leq N)$ (see details in Remark~\ref{remark:upper triangular} (c)).
	Then for every fixed constant $c>0$ and $N$ sufficiently
large (depending on $c$ and $\mu$), the eigenvalues of $X_N$ are distinct with probability at
least $1-N^{-c}$. 
That is, for sufficiently large $N$, the probability that the eigenvalues of $X_N$ are distinct tends to 1.
\end{Thm}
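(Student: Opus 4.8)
The plan is to establish the theorem by bounding the probability that $X_N$ has a \emph{repeated} eigenvalue by $N^{-c}$; the complementary event is exactly that the spectrum is simple, which by Lemma~\ref{Lem_undirected_shift} is what guarantees the shift-enabled property. The engine of the argument is \emph{Cauchy interlacing}. Let $X_{N-1}$ be the principal minor obtained by deleting the last row and column of $X_N$, with eigenvalues $\mu_1\le\cdots\le\mu_{N-1}$ interlacing the eigenvalues $\lambda_1\le\cdots\le\lambda_N$ of $X_N$. A coincidence $\lambda_i=\lambda_{i+1}$ then forces $\mu_i=\lambda_i$, so a double eigenvalue of the full matrix is necessarily \emph{shared} with the minor. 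This is the crucial reduction: it lets us expose one fresh, independent row and convert a spectral coincidence into a single scalar equation.

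Next I would make the shared-eigenvalue condition explicit. Write $X_N=\begin{pmatrix} X_{N-1} & x \\ x^{T} & x_{N,N}\end{pmatrix}$, where $x\in\mathbb{R}^{N-1}$ collects the off-diagonal entries of the last column and, by hypothesis, is independent of $X_{N-1}$. If $\mu$ is an eigenvalue of $X_{N-1}$ with unit eigenvector $u$, then a short computation (via the Schur-complement / secular equation) shows that, when $\mu$ is simple in $X_{N-1}$, it can remain an eigenvalue of $X_N$ only if $x^{T}u=0$, i.e. the new column must be exactly orthogonal to $u$. Conditioning on $X_{N-1}$ freezes every candidate pair $(\mu,u)$, and the whole problem collapses to bounding the anti-concentration probability $\mathbb{P}(x^{T}u=0)$ for a fixed $u$ and a random $x$ with independent, $\mu$-nondegenerate coordinates.

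The third step is anti-concentration. The nontrivial-distribution hypothesis (quantified by $\mu$) is precisely what feeds the Erd\H{o}s--Littlewood--Offord and Hal\'asz inequalities: the more spread out $u$ is, the smaller $\mathbb{P}(x^{T}u=0)$ becomes. A delocalization input is therefore required, guaranteeing that with high probability every eigenvector $u$ of $X_{N-1}$ has $\Omega(N)$ coordinates of size $\gtrsim N^{-1/2}$, so that the basic bound $\mathbb{P}(x^{T}u=0)=O(N^{-1/2})$ applies. Since we must union-bound over all $\sim N$ candidate eigenvalues of the minor, $O(N^{-1/2})$ is by itself too weak, and I would boost it to a polynomially small bound $O(N^{-c-1})$ using the inverse Littlewood--Offord theorem, which asserts that an anti-concentration probability exceeding $N^{-A}$ forces $u$ to carry rigid additive (generalized arithmetic progression) structure, an event that is rare for a genuine random eigenvector.

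I expect this anti-concentration step to be the main obstacle. Obtaining a \emph{polynomially} small bound on $\mathbb{P}(x^{T}u=0)$ -- rather than the easy $N^{-1/2}$ -- is the hard core of the Tao--Vu theory, and it is exactly what forces $N$ to be large in terms of $c$ and $\mu$: one must exclude the structured, localized eigenvectors on which anti-concentration degrades and carefully balance the gain from the inverse theorem against the union bound over the minor's eigenvalues (and over the choice of which row to expose). Once $\mathbb{P}(\text{repeated eigenvalue})\le N^{-c}$ is established, distinctness holds with probability at least $1-N^{-c}$, and letting $N\to\infty$ gives the stated convergence to $1$; combined with Lemma~\ref{Lem_undirected_shift} this yields the shift-enabled property invoked in Region~2.
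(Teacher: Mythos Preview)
The paper does not supply its own proof of this theorem: it is quoted verbatim as Theorem~1.3 of Tao--Vu \cite{tao2017random} and used as a black box, so there is nothing in the paper to compare your argument against. What you have written is an accurate high-level reconstruction of the Tao--Vu proof itself: the Cauchy-interlacing reduction to a shared eigenvalue with the principal $(N-1)\times(N-1)$ minor, the orthogonality constraint $x^{T}u=0$ on the freshly exposed column, the Littlewood--Offord anti-concentration bound for that inner product, and the use of eigenvector delocalization together with the inverse Littlewood--Offord theory to push the probability below any prescribed $N^{-c}$. Your identification of the anti-concentration step as the bottleneck, and of the tension between the union bound over $\sim N$ eigenvectors and the naive $O(N^{-1/2})$ Erd\H{o}s--Littlewood--Offord bound, is exactly right.

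Two small remarks. First, your orthogonality claim is correct but deserves one line of justification you left implicit: when $\mu$ is a simple eigenvalue of $X_{N-1}$ with eigenvector $u$, the range of $X_{N-1}-\mu I$ equals $u^{\perp}$, so the block eigenvector equation $(X_{N-1}-\mu I)v=-ax$ forces $x\perp u$ in the case $a\neq 0$, while the case $a=0$ gives it directly. Second, for the application made in the paper you should be aware that the hypothesis ``$x_{i,i}$ independent of the off-diagonal entries'' is what restricts the clean statement to the adjacency matrix (cf.\ Remark~\ref{remark:upper triangular}(c)); the Laplacian diagonal is a deterministic function of its row, so invoking the theorem for $L$ requires either the extension in \cite{tao2017random} that allows dependent diagonals or the observation that $L$ and $A$ differ by a diagonal shift whose effect can be controlled separately.
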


\begin{figure}[htb]
	\centering
	\subfigure[]{
	\includegraphics[width=0.48\linewidth]{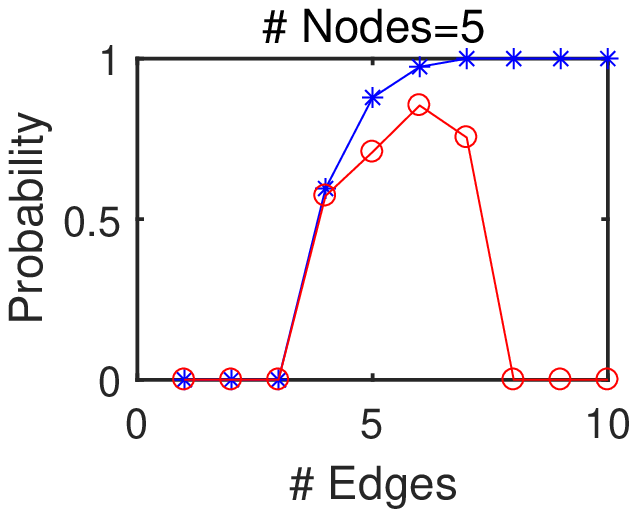}}
	\subfigure[]{
	\includegraphics[width=0.48\linewidth]{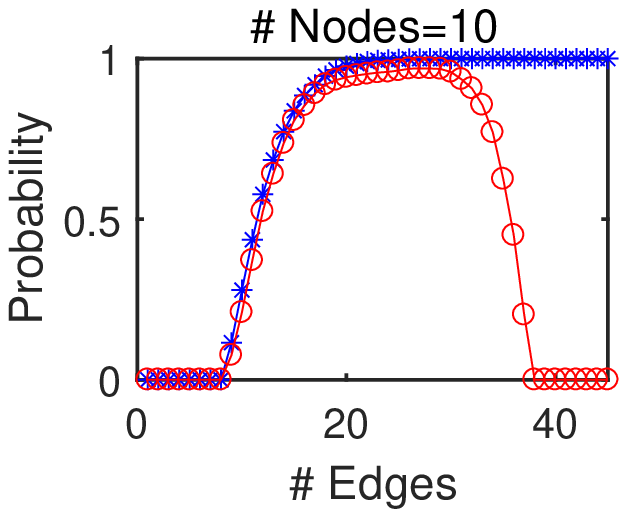}}
	\subfigure[]{
	\includegraphics[width=0.48\linewidth]{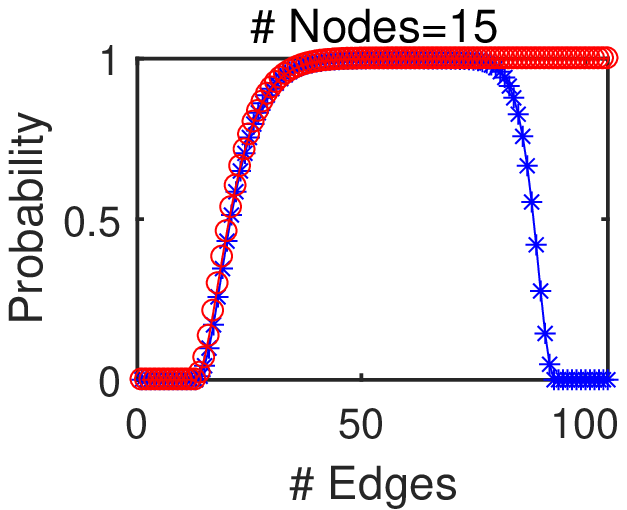}}
	\caption{For an ER graph, the relationship between the probability that the graph is connected (red line) and the shift-enabled probability $p$ (blue line). The horizontal axis shows the number of the edges $M$ and the vertical axis is the probability $p$. (a) $N$=5 nodes; (b) $N$=10 nodes; (c) $N$=15 nodes.}
	\label{fig:connectshiftproblapnode5}
\end{figure}

\begin{Rem}\label{remark:upper triangular}
	(a) The upper-triangular entries are independent since we focus on randomly generated graphs. 
	(b) Non-trivial distribution\footnote{A real-valued random variable $\xi$ is non-trivial if there is a fixed $\mu > 0$
such that $Pr\{\xi=x\}\leq 1-\mu$ (see Equation (1) in Ref.~\cite{tao2017random}).} - elements in Laplacian matrix of an ER graph are non-trivial if $P$ (the probability that an edge is present) stays bounded away from both 0 and 1. If $P=0$ or $P=1$, the entries in the graph Laplacian matrix have trivial distribution, and Theorem~\ref{Thm: simple spectrum} is not applicable.
(c) Since $x_{i,i}=0$ for $1\leq i \leq N$, diagonal entries $x_{i,i}$ are independent of the upper diagonal entries.
\end{Rem}
As a result, the eigenvalues of Laplacian matrix are distinct (the graph is shift-enabled, hence $p=1$) when the number of nodes $N$ is large enough, and the number of edges $M$ is moderately large (very large $M$ implies $P$ close to zero, where Theorem~\ref{Thm: simple spectrum} does not hold).

Our experiments show that the behaviour of $p$ is very similar to the probability of a random graph being connected (see Figure  {\ref{fig:connectshiftproblapnode5}}).
Combining Theorem~\ref{Thm_p_tends_to_zero}, Theorem~\ref{Thm: simple spectrum} and the relation between shift-enabled and connected graphs, we claim that $p$ is close to 1 in Region 2, when the number of nodes $N$ is large and the probability of an edge is bounded away from 0 and 1. When $N$ is very small, Theorem~\ref{Thm: simple spectrum} does not hold; indeed, it can be seen from Figure  {\ref{fig:connectshiftproblapnode5}} that $p$ does not reach 1 for $N$=5 and 10. 

\subsubsection{Region 3 $-$ Very large $M$}\label{sec: Region 3}



We analyse Region 3, that is, the case of very large $M$, by looking graph complement and showing that a graph ${\mathcal{G}}$ and its complement ${\mathcal{G^C}}$ have the same shift-enabled probability. Let $L_{\mathcal{G}}$ and $L_{\mathcal{G^C}}$ denote the Laplacian matrix of ${\mathcal{G}}$ and its complement ${\mathcal{G}^C}$, respectively. 
The following theorem gives the condition that  $\mathcal{G}$ and its complement have the same shift-enabled property.
\begin{Thm}{\label{Thm_complement_shift}}
	If for an $N$-node random graph $\mathcal{G}$, $N$ is not an eigenvalue of $L$ then $\mathcal{G}$ is a shift-enabled graph if and only if $\mathcal{G}^C$ is a shift enabled graph.
\end{Thm}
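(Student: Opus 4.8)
The plan is to reduce the statement to a comparison of the Laplacian spectra of $\mathcal{G}$ and $\mathcal{G}^C$ and then exploit the simple algebraic relationship between them. Since the shift matrix is here the real symmetric Laplacian, Lemma~\ref{Lem_undirected_shift} tells us that being shift-enabled is equivalent to having all eigenvalues distinct; hence it suffices to prove that $L_{\mathcal{G}}$ (which I abbreviate $L$) has a simple spectrum if and only if $L_{\mathcal{G}^C}$ does. The starting identity is that $\mathcal{G}$ and $\mathcal{G}^C$ partition the edge set of the complete graph, so $L_{\mathcal{G}}+L_{\mathcal{G}^C}=NI-J$, where $J=\mathbf{1}\mathbf{1}^{\top}$ is the all-ones matrix and $\mathbf{1}$ the all-ones vector.

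First I would establish that $L_{\mathcal{G}}$ and $L_{\mathcal{G}^C}$ are simultaneously diagonalisable by exhibiting a common orthonormal eigenbasis. The vector $\mathbf{1}$ lies in the kernel of every graph Laplacian, so $L_{\mathcal{G}}\mathbf{1}=L_{\mathcal{G}^C}\mathbf{1}=\mathbf{0}$, and $J$ annihilates every vector orthogonal to $\mathbf{1}$. Thus, taking an orthonormal eigenbasis $u_1=\mathbf{1}/\sqrt{N},u_2,\dots,u_N$ of $L_{\mathcal{G}}$ with $L_{\mathcal{G}}u_i=\lambda_i u_i$ and $\lambda_1=0$, the identity above yields $L_{\mathcal{G}^C}u_1=\mathbf{0}$ and $L_{\mathcal{G}^C}u_i=(N-\lambda_i)u_i$ for $i\geq 2$. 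Hence the spectrum of $L_{\mathcal{G}^C}$ is $\{0\}\cup\{N-\lambda_i:2\leq i\leq N\}$.

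Next I would carry out the distinctness bookkeeping. The map $\lambda\mapsto N-\lambda$ is a bijection, so the values $N-\lambda_2,\dots,N-\lambda_N$ are pairwise distinct exactly when $\lambda_2,\dots,\lambda_N$ are; this transfers the distinctness of the bulk of the spectrum in both directions at once. The only way a repetition can appear on one side but not the other is through the distinguished zero eigenvalue attached to $\mathbf{1}$: on the complement side a collision $N-\lambda_i=0$ occurs precisely when $\lambda_i=N$, which is exactly what the hypothesis $N\notin\mathrm{spec}(L)$ forbids (equivalently, this hypothesis forces the zero eigenvalue of $L_{\mathcal{G}^C}$ to be simple, i.e. $\mathcal{G}^C$ connected). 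So under the hypothesis $L_{\mathcal{G}^C}$ has a simple spectrum iff $\lambda_2,\dots,\lambda_N$ are pairwise distinct, whereas $L_{\mathcal{G}}$ has a simple spectrum iff $\lambda_2,\dots,\lambda_N$ are pairwise distinct \emph{and} all nonzero.

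The step I expect to be the main obstacle is precisely this accounting for the ever-present zero eigenvalue, since it is the one place where the correspondence is not symmetric: a repeated zero of $L_{\mathcal{G}}$ (i.e. $\mathcal{G}$ disconnected) and a repeated zero of $L_{\mathcal{G}^C}$ (i.e. some $\lambda_i=N$) play asymmetric roles. The hypothesis $N\notin\mathrm{spec}(L)$ is calibrated to eliminate the latter, while Theorem~\ref{Thm_p_tends_to_zero} disposes of the former by declaring any disconnected graph non-shift-enabled; in the dense regime of Region 3, where $\mathcal{G}$ is connected with high probability, the nonzero-ness of $\lambda_2,\dots,\lambda_N$ is automatic and the two conditions coincide, giving the equivalence. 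I would therefore phrase the argument so that the multiplicity of the zero eigenvalue is tracked explicitly on each side rather than relying on the bijection $\lambda\mapsto N-\lambda$ alone.
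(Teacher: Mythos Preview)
Your approach is exactly the paper's: reduce via Lemma~\ref{Lem_undirected_shift} to simple-spectrum, then use the relation $L_{\mathcal G}+L_{\mathcal G^C}=NI-J$ (equivalently Lemma~\ref{Lem_spec_complement}) to read off $\mathrm{spec}(L_{\mathcal G^C})=\{0\}\cup\{N-\lambda_i:i\ge 2\}$ and compare distinctness.

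Where you go further than the paper is in the bookkeeping for the zero eigenvalue. The paper dismisses the reverse implication with ``The necessity can be easily proven in the same way,'' but your observation that the two zero eigenvalues play asymmetric roles is correct and in fact exposes a gap in the statement as written: the hypothesis $N\notin\mathrm{spec}(L_{\mathcal G})$ prevents a collision $N-\lambda_i=0$ on the complement side, yet nothing in the hypotheses prevents a repeated zero on the $\mathcal G$ side. Concretely, take $N=3$ with a single edge: $\mathrm{spec}(L_{\mathcal G})=\{0,0,2\}$, so $\mathcal G$ is not shift-enabled and $3\notin\mathrm{spec}(L_{\mathcal G})$, while $\mathrm{spec}(L_{\mathcal G^C})=\{0,1,3\}$, so $\mathcal G^C$ \emph{is} shift-enabled. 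Thus the ``if'' direction fails without an additional assumption. Your instinct to track the multiplicity of zero explicitly is the right one; the clean repair is to also assume $\mathcal G$ connected (equivalently $N\notin\mathrm{spec}(L_{\mathcal G^C})$), after which the argument is genuinely symmetric. This costs nothing for the paper's application, since only the forward implication (contraposed) is used in Corollary~\ref{Cor:R3} and the Region~3 discussion.
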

\begin{proof}
By Lemma~\ref{Lem_complement_connect}, if $N$ is not an eigenvalue of $L$ then $\mathcal{G}^C$ is a connected graph.
	To prove the sufficiency, let $Spec(L_{\mathcal{G}})=(\lambda_1, \lambda_2, \cdots, \lambda_{N-1},\lambda_N=0)$. 
	Then, $Spec(L_{\mathcal{G^C}})=(N-\lambda_1, N-\lambda_2,  \cdots,N-\lambda_{N-1} , 0)$ by  Lemma~\ref{Lem_spec_complement}.
	
	If $\mathcal{G}$ is a shift-enabled graph, then $\lambda_i \neq \lambda_j$  
	according to Lemma~\ref{Lem_undirected_shift}.
	If in addition $N$ is not the eigenvalue of $L$, it can be readily concluded that
	$N-\lambda_i \neq N-\lambda_j$ and $N-\lambda_k \neq 0$  ($1\leq i<j\leq N-1$, $1\leq k\leq N-1$), i.e., all eigenvalues in $L_{\mathcal{G}^C}$ are distinct. Consequently,  $\mathcal{G}^C$ is a shift-enabled graph
	by Lemma~\ref{Lem_undirected_shift}.
	
	The necessity can be easily proven in the  same way.
	To sum up, $\mathcal{G}$ is a shift-enabled graph if and only if $\mathcal{G}^C$ is a shift-enabled graph.
\end{proof}

\begin{Cor}
\label{Cor:R3}
If $N$ is not an eigenvalue of $L$, then, for very large number of edges relative to the number of nodes $N$ ($M>N(N-1)/2-N+2$) the graph is non-shift-enabled.
\end{Cor}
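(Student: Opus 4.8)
The plan is to reduce this statement to two results already in hand, namely the small-edge obstruction of Corollary~\ref{Cor：non shift for small edges} and the complement duality of Theorem~\ref{Thm_complement_shift}, the bridge between them being a one-line edge count on the complement graph $\mathcal{G}^C$. First I would translate the hypothesis $M>N(N-1)/2-N+2$ into a statement about $\mathcal{G}^C$. Since $\mathcal{G}$ and $\mathcal{G}^C$ jointly account for all $\binom{N}{2}=N(N-1)/2$ edges of the complete graph, the complement has $M^C=N(N-1)/2-M$ edges. The bound on $M$ is then exactly $M^C<N-2$, and as $M^C$ is an integer this yields $M^C\le N-3\le N-2$.

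With this in place, I would apply Corollary~\ref{Cor：non shift for small edges} not to $\mathcal{G}$ but to $\mathcal{G}^C$: because $\mathcal{G}^C$ has at most $N-2$ edges while any connected graph on $N$ nodes needs at least $N-1$ edges, $\mathcal{G}^C$ is disconnected, so by Theorem~\ref{Thm_p_tends_to_zero} its Laplacian is non-shift-enabled. Finally I would invoke Theorem~\ref{Thm_complement_shift}: the assumption that $N$ is not an eigenvalue of $L$ is precisely the hypothesis that theorem requires, so $\mathcal{G}$ is shift-enabled if and only if $\mathcal{G}^C$ is. Since $\mathcal{G}^C$ has just been shown to be non-shift-enabled, $\mathcal{G}$ is non-shift-enabled as well, which is the claim.

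I expect no genuine obstacle, as the statement is an almost immediate corollary of the preceding results. The only point demanding a little care is the arithmetic linking the strict inequality $M>N(N-1)/2-N+2$ to the non-strict bound $M^C\le N-2$ needed to invoke Corollary~\ref{Cor：non shift for small edges}, together with making sure the complement-duality hypothesis $N\notin\mathrm{Spec}(L)$ is carried through unchanged so that Theorem~\ref{Thm_complement_shift} applies.
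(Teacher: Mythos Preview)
Your proposal is correct and matches the paper's own reasoning: the paper likewise derives the corollary by noting that Theorem~\ref{Thm_complement_shift} makes the shift-enabled property symmetric between $\mathcal{G}$ and $\mathcal{G}^C$, and then applies Corollary~\ref{Cor：non shift for small edges} to $\mathcal{G}^C$, which has at most $N-2$ edges under the stated bound on $M$. Your edge-count translation $M^C=\binom{N}{2}-M<N-2$ is exactly the link the paper uses (implicitly), and your care with the strict-to-nonstrict inequality is appropriate.
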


Based on Theorem~\ref{Thm_complement_shift}, the behaviour of $p$ for $\mathcal{G}$ and $\mathcal{G}^C$ is symmetric. Since based on Corollary~\ref{Cor：non shift for small edges}, $\mathcal{G}^C$ is non-shift-enabled when its number of edges is smaller or equal to $N-2$, then, in Region 3, for a very large number of edges, $\mathcal{G}$ is also non-shift-enabled. Note that this implies that fully connected ER graphs are non-shift-enabled. Indeed, as we know, the eigenvalues of a fully connected graph are ${\{-1\}}^{N-1}$ (the multiplicity of $-1$ is $N-1$) and $N-1$, which ensures that the graph is not shift enabled.



The conclusions can be extended to other commonly used shift matrices.
\begin{Rem}
		Commonly used shift matrices of $\mathcal{G}$ are: adjacency matrix $A$, 
		Laplacian matrix $L$, normalized Laplacian matrix $\mathcal{L}$, signless Laplacian matrix $|L|$ and probability transition matrix $T$ which have similar shift-enabled property rules (see Figure~\ref{fig:different shift matrix} for an illustration).
	\begin{figure}[htb]
		\centering
		\includegraphics[width=0.7\linewidth]{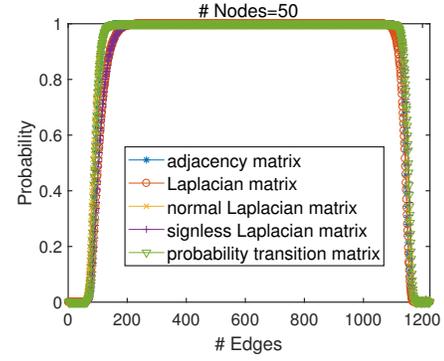}
		\caption{$p$ vs. the number of edges $M$ in an ER random graph with $N$=50 nodes. Adjacency matrix $A$, Laplacian matrix $L$, normal Laplacian matrix $\mathcal{L}$,  signless Laplacian matrix $|L|$ and probability transition matrix $T$ have similar shift-enabled properties.}
		 	\label{fig:different shift matrix}
	\end{figure}
\end{Rem}
\subsection{Shift-enabled properties of WS graphs}
In the previous subsection we discussed how, for fixed $N$, $p$ changes as the number of edges $M$ varies in ER graphs. For an $\mathcal{G}=WS(N, K,\beta)$ graph, in this subsection, we investigate how $p$ depends on WS parameters, namely, $K$, the average degree of nodes and $\beta$ the rewriting probability. Note that the number of edges is $M=N\dot K$. We separately discuss two cases. 
\subsubsection{The number of nodes $N$ fixed, and the average degree $K$ and rewriting probability $\beta$ vary}

\begin{figure}[htb]
	\centering
	\includegraphics[width=0.7\linewidth]{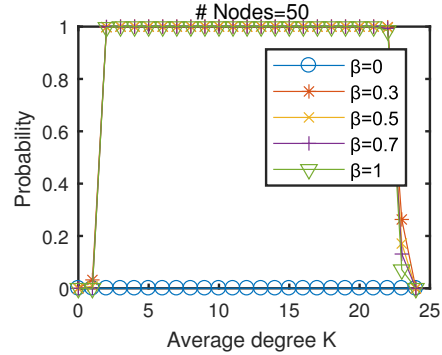}
	\caption{The effect of $\beta$ on the shift-enabled probability $p$ of WS graphs. Except for $\beta=0$ where WS graph is a regular ring lattice, the similar conclusions as for ER graphs can be taken. 
	}
	\label{fig:nodes20changebetalap}
\end{figure}

As shown in Figure~\ref{fig:nodes20changebetalap}, except for $\beta=0$ where the WS graph is a regular ring lattice (see Theorem~\ref{Lem:ring lattice}), the discussions  related to the ER graphs in Section~\ref{sec: Region 1} and Section~\ref{sec: Region 2}
apply to WS graphs as well.
\begin{Thm}{\label{Lem:ring lattice}}
Regular ring lattice is non-shift-enabled when the shift matrix is Laplacian matrix.  
\end{Thm}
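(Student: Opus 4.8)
The plan is to exploit the cyclic symmetry of the regular ring lattice: since it is a vertex-transitive circulant graph, its Laplacian is a circulant matrix, and the eigenvalues of a real symmetric circulant matrix necessarily occur in equal pairs. This forces a repeated eigenvalue whenever $N\geq 3$, and the conclusion then follows from Lemma~\ref{Lem_undirected_shift}, which states that a real symmetric shift matrix is shift-enabled if and only if all of its eigenvalues are distinct.

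First I would set up the circulant structure explicitly. In a regular ring lattice $WS(N,K,0)$ every vertex $i$ is adjacent exactly to $i\pm 1,\dots,i\pm K/2 \pmod N$, so the adjacency matrix $W$ is circulant and every vertex has the same degree $K$; hence $D=KI$ and the Laplacian $L=D-W=KI-W$ is itself circulant.

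Next I would diagonalise $L$ in the Fourier eigenbasis shared by all $N\times N$ circulant matrices. With $\omega=e^{2\pi i/N}$ and $v_k=(1,\omega^k,\dots,\omega^{(N-1)k})^\top$, each $v_k$ is an eigenvector of $L$ with eigenvalue
\[
\lambda_k = K-2\sum_{j=1}^{K/2}\cos\!\left(\frac{2\pi jk}{N}\right),\qquad k=0,1,\dots,N-1.
\]
The decisive observation is the reflection symmetry $\cos\!\left(2\pi j(N-k)/N\right)=\cos\!\left(2\pi jk/N\right)$, which gives $\lambda_k=\lambda_{N-k}$ for every $k$. Taking $k=1$ (so that $N-k=N-1\neq 1$ precisely when $N\geq 3$) yields two distinct Fourier indices sharing the same eigenvalue, i.e.\ a genuinely repeated eigenvalue of $L$; equivalently, the real eigenvectors with entries $\cos(2\pi k\,\ell/N)$ and $\sin(2\pi k\,\ell/N)$ span a two-dimensional eigenspace.

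Having produced a repeated eigenvalue, I would invoke Lemma~\ref{Lem_undirected_shift} to conclude that $L$ is non-shift-enabled, completing the proof. I expect the only delicate point to be confirming that the pairing $\lambda_k=\lambda_{N-k}$ produces an \emph{honest} multiplicity rather than a coincidental numerical collision --- that is, that $k$ and $N-k$ index linearly independent eigenvectors; this is immediate from the orthogonality of the Fourier basis once one restricts to $1\leq k<N/2$, and the (separate but trivial) small-$N$ cases can be verified directly.
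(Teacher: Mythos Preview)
Your proof is correct and follows essentially the same route as the paper: both arguments recognise that the Laplacian of the regular ring lattice is circulant, compute its eigenvalues via the Fourier basis as a cosine sum, observe the reflection symmetry $\lambda_k=\lambda_{N-k}$ coming from $\cos\!\bigl(2\pi j(N-k)/N\bigr)=\cos\!\bigl(2\pi jk/N\bigr)$, and then invoke Lemma~\ref{Lem_undirected_shift}. The only cosmetic differences are that the paper first expresses $A_{ring}$ as a polynomial in the basic cyclic shift before reading off the eigenvalues, and it uses a convention in which each vertex has $K$ neighbours on each side (total degree $2K$) rather than your $K/2$ on each side; your explicit care about $N\geq 3$ and the linear independence of the paired eigenvectors is a welcome clarification that the paper leaves implicit.
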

\begin{proof}
Regular ring lattice is a special circulant graph in which each node is connected to the nearest $K$ nodes.
Let $A_{cir}$ denote an $N\times N$
circulant matrix whose first row is $[0,1,0,\cdots, 0]$ and the adjacency  matrix of regular ring lattice is $A_{ring}$ whose first row is $[0,\underbrace { {1,1, \cdots ,1} }_K,0,\cdots,0,\underbrace { {1,1, \cdots ,1} }_K]$. Then,
\begin{equation*}
    A_{ring}={\sum_{j=1}^{K}A_{cir}^{j}}+{\sum_{j={N-K}}^{N-1}A_{cir}^{j}}.
\end{equation*}

Since the eigenvalues of $A_{ring}$ are $1,\omega,\omega^{N-1}$, where $\omega=exp(2\pi i/N)=\cos\frac{2\pi }{N}+i\sin\frac{2\pi }{N},$ 
it can be easily concluded that the eigenvalues of $A_{ring}$ are
\begin{flalign*}
 \lambda_j&=\omega^j+
 \cdots+\omega^{K j}+\omega^{(N-K)j}+\cdots+\omega^{(N-1)j}\\   
 &=2[\cos\frac{2\pi j}{N}+\cos\frac{4\pi j }{N}+\cdots+\cos\frac{2K\pi j} {N}],
\end{flalign*}
 where $j=0,1,\cdots,N-1$~\cite{toddh}.
It follows that 
\begin{flalign*}
        &~~\lambda_{N-1}\\
        &=2[\cos\frac{2(N-1)\pi }{N}+\cos\frac{4(N-1)\pi  }{N}+\cdots+\cos\frac{2K(N-1)\pi } {N}]\\
      &=2[\cos\frac{2\pi }{N}+\cos\frac{4\pi  }{N}+\cdots+\cos\frac{2K\pi } {N}]\\
      &=\lambda_1.
\end{flalign*}
Generally, 
$\lambda_0=2K$ and $\lambda_j =\lambda_{N-j}$, for $j=1,2,\cdots,\lfloor \frac{N-1}{2} \rfloor$~\footnote{$\lfloor \frac{N-1}{2} \rfloor$ denotes rounding down $\frac{N-1}{2}$.}.

 Furthermore, the Laplacian matrix of the regular ring lattice is $L_{ring}=D_{ring}-A_{ring}$ and the degree matrix $D_{ring}=2K\times I_N$ \footnote{$I_N$ is an identity matrix of size $N \times N$.}. Therefore, the eigenvalues of $L_{ring}$ are $2K-\lambda_j$ for $i=0,1,\cdots,N$ and $2K-\lambda_j =2K-\lambda_{N-j}$ for $j=1,2,\cdots,\lfloor \frac{N-1}{2} \rfloor$.
Then, according to Lemma~\ref{Lem_undirected_shift}, the regular ring lattice is not shift enabled.


\end{proof}
\subsubsection{Rewriting probability $\beta$ fixed and the number of nodes $N$ and the average node degree $K$ vary}

As shown in Figure~\ref{fig:Watt_change_node}, similarly to ER graphs, as the number of nodes increases, the shift-enabled property tends to be stable and has the properties discussed in Section~\ref{sec: Region 1} and Section~\ref{sec: Region 2}. 
\begin{Rem}
Note that, the symmetry in Section~\ref{sec: Region 3} is not as obvious in Figure~\ref{fig:nodes20changebetalap} and  Figure~\ref{fig:Watt_change_node} as it is in Figure~\ref{fig:the shift-enabled probability of ER graphs}. This is because the horizontal axis in WS graph is the average $K$ which is a non-negative even number, while the horizontal axis in Figure~\ref{fig:nodes20changebetalap}
is the number of edges $M$. The symmetry of $p$ of $\mathcal{G}$ and $\mathcal{G}^C$ in Theorem~\ref{Thm_complement_shift} is discussed based on $M$. $K$ and $M$ are linearly dependent via $K=M/N$.
\end{Rem}

\begin{figure}[htb]
	\centering
	\subfigure[]{
		\label{fig:WattFixBeta10node}
		
		\includegraphics[width=1.5in]{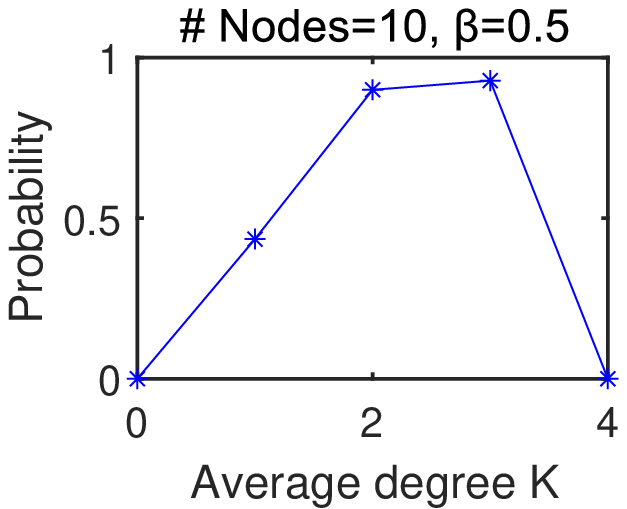}}
	\centering
	\subfigure[]{
		\label{fig:WattFixBeta20node} 
		\includegraphics[width=1.5in]{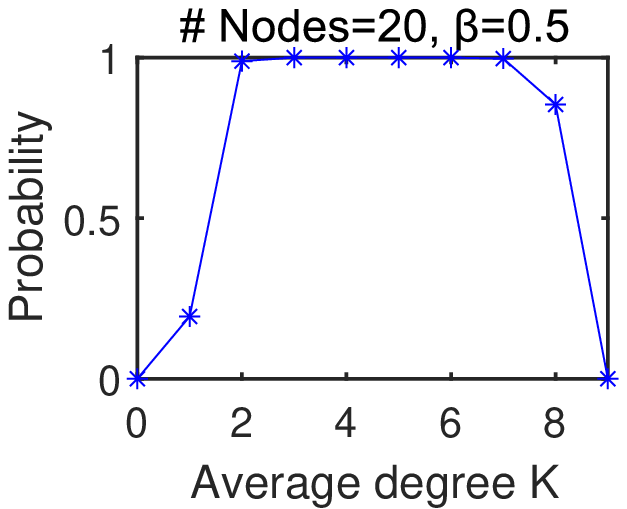}}
	\centering
	\subfigure[]{
		\label{fig:Watt_fix_beta_50node} 
		\includegraphics[width=1.5in]{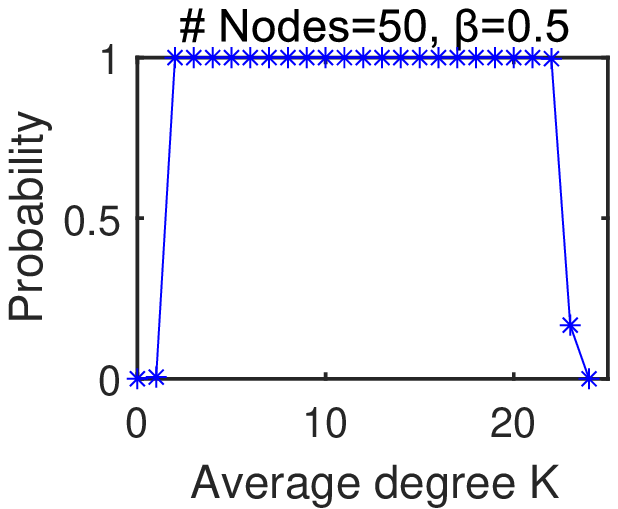}}
	\caption{The shift-enabled probability $p$ of WS graph as a function of $K$. With the increase of the number of nodes, $p$ follows the same trends as for ER graphs. (a) $N=$10 nodes. 
		(b) $N=$20 nodes. (c) $N=$50 nodes. 
	}
	\label{fig:Watt_change_node}
\end{figure}

\subsection{Shift-enabled properties of BA graphs}
\begin{figure}[htb]
	\centering
	\subfigure[]{
		\includegraphics[width=1.5in]{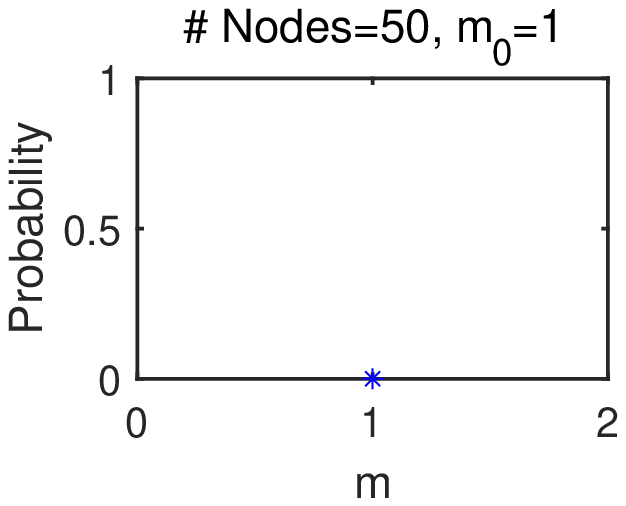}}
	\quad

	\centering
	\subfigure[]{
		\includegraphics[width=1.5in]{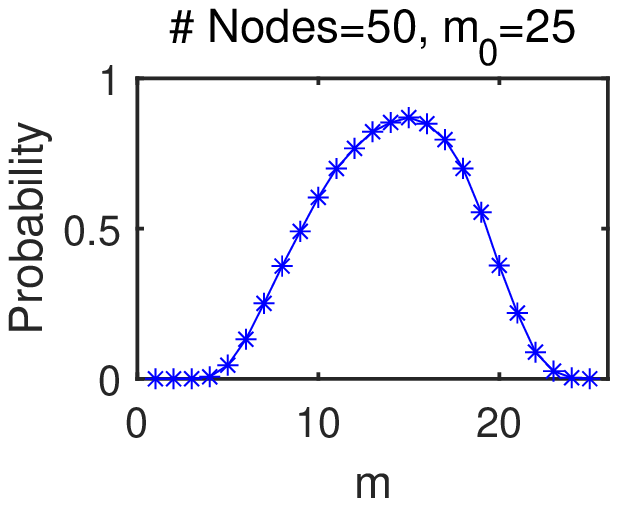}
	}

	\caption{The effect of $m_0$ on the shift-enabled probability $p$ in a BA graph with $N=50$.
		(a) $m_0=1$; 
		(b) $m_0=25$;
	}
	\label{fig：BA_fix_change_m0}
\end{figure}
	\begin{figure}[htb]
	    \centering
	    \includegraphics[width=2in]{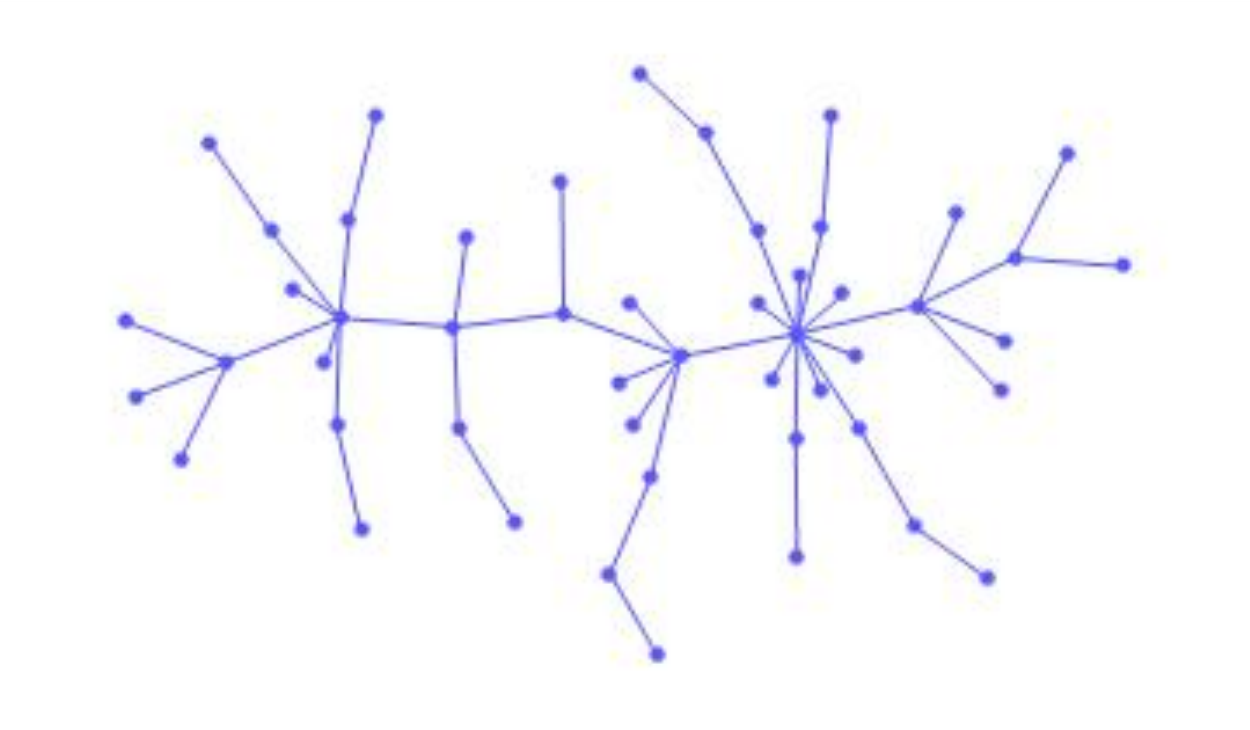}
	    \caption{The BA graph with $m_0=1$ and $N=50$. As $m=1$ the graph is a tree graph, in which case the  main components of the random model contain only a small number of nodes.} 
	    \label{fig:BA graph with m0_1}
	\end{figure}
\begin{figure}[htb]
    \centering
    \includegraphics[width=2.5in]{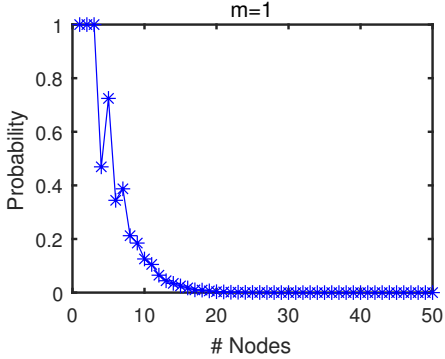}
    \caption{BA graph: $p$ vs. the number of nodes $N$ for $m_0=m=1$.}
    \label{fig:BAChangeN_m0_1}
\end{figure}
In Figure~\ref{fig：BA_fix_change_m0}, we fixed the total number of nodes $N$, and change the initial number of nodes $m_0$ from 1 to $N$ to study the probability $p$ of a graph being shift-enabled for a range of $m_0$ values.
 The BA graph used in the experiment was generated by $gsp\_barabasi\_albert.m$ function available in GSPBox~\cite{perraudin2014gspbox}. 
 Note that the total number of edges is $M=(N-m_0)\times m$.  Since $m\leq m_0$, the maximum total number of edges is $|M|_{max}=(N-m_0)\times m_0$. For fixed $N$, $|M|_{max}$ is $m_0=\frac{N}{2}$. We provide the results for $m_0=\frac{N}{2}=25$.
 
 Based on Figure~\ref{fig：BA_fix_change_m0}, we make the following conclusions:
 \begin{itemize}
     \item  As can be seen from Figure~\ref{fig：BA_fix_change_m0} (a), when $m_0=1$, the number of edges for each node addition, $m$, can only be equal to 1 (since $m\leq m_0$). In this case the graph is a tree graph (Figure~\ref{fig:BA graph with m0_1}),  in which few nodes have many connections, while most nodes have very few connections, which reflects the scale-free feature of the BA graph. As shown in Figure~\ref{fig:BAChangeN_m0_1}, as the number of nodes $N$ increases, the connectivity of the vertex is increasingly concentrated, and the probability $p$ tends to 0.


 \item When $m_0=25$ and $m=1$, $M=25<N-1=49$, the graph is unconnected and hence $p=0$. As $m$ increases, the probability that the graph is connected increases, and as a result, $p$ increases.
\end{itemize}

\section{Extension to Weighted and Signed graphs}
\label{sec:weights}
In this section we extend the results to random weighted and signed graphs. 

\subsection{Weighted graphs}
Many practical scenarios are modelled by weighted graphs, where Gaussian distribution or Exponential distribution~\cite{spiegel2001probability} are often used to define the edge weights. Figures~\ref{fig:graph compare with with exponentially distributed weights} and \ref{fig:graph compare with Gaussian distributed weights} show how $p$ varies with the number of edges $M$ for the three considered random graph models with $N$=50 nodes.


 


\begin{figure}[htb]
	\centering
	\subfigure[]{
    		\includegraphics[width=1.5in]{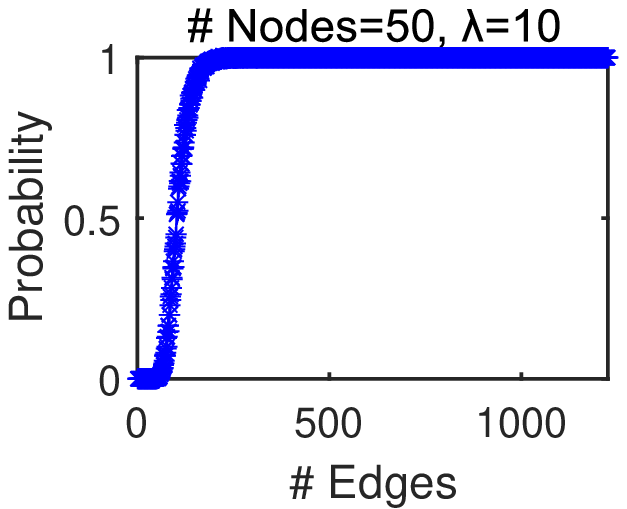}}
	\subfigure[]{
		\includegraphics[width=1.5in]{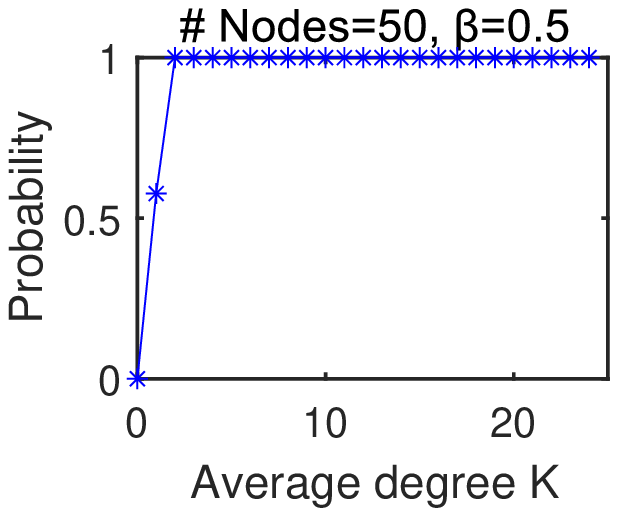}}
	\subfigure[]{
		\includegraphics[width=1.5in]{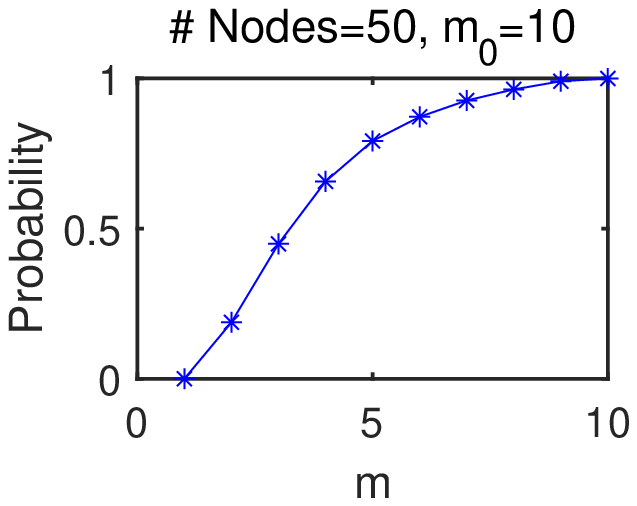}}
	\caption{The shift-enabled probability $p$ of weighted graphs with exponentially distributed weights with $\lambda=10$. (a){ ER graph; (b) WS graph;} (C) BA graph.}
	\label{fig:graph compare with with exponentially distributed weights}
\end{figure}

\begin{figure}[htb]
	\centering
	\subfigure[]{
		\includegraphics[width=1.5in]{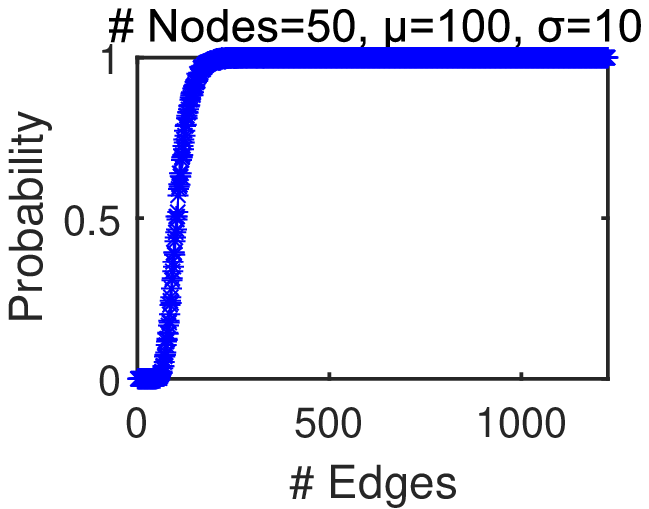}}
	\subfigure[]{
		\includegraphics[width=1.5in]{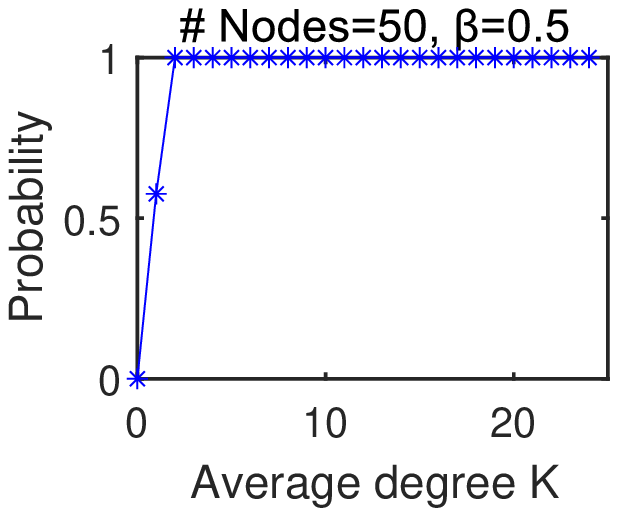}}
	\subfigure[]{
		\includegraphics[width=1.5in]{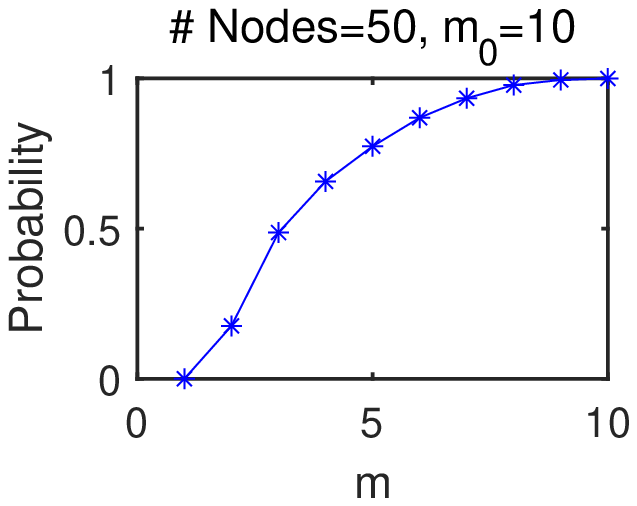}}
	\caption{The shift-enabled probability $p$ of weighted graphs with Gaussian distributed weights with $\mu=100$ and $\sigma=10$. {(a) ER graph; (b) WS graph;} (C) BA graph.}
	\label{fig:graph compare with Gaussian distributed weights}
\end{figure}

\begin{Thm}{\label{Thm:general law_weighted_graph}}
Theorem \ref{Thm_p_tends_to_zero}, Corollary \ref{Cor：non shift for small edges} and Theorem~\ref{Thm: simple spectrum} still hold for weighted graphs. Therefore, the probability $p$ of the graph being shift-enabled has the following properties:
\begin{itemize}
	\item {Region 1.} The probability $p$ is $0$, when the number of edges $M$ is relatively small which in accordance with Section~\ref{sec: Region 1} for an unweighted graph.
	\item {Region 2.}
	The probability $p$ is close to 1 when the number of edges increases which is in accordance to Section~\ref{sec: Region 2}.
	\end{itemize}
\end{Thm}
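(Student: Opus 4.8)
The plan is to show that each of the three ingredients behind the Region~1 and Region~2 analysis survives the introduction of edge weights, and then to reassemble them exactly as in the unweighted case. I would treat the three cited results separately, since the weights enter each of them very differently: Theorem~\ref{Thm_p_tends_to_zero} and Corollary~\ref{Cor：non shift for small edges} are essentially insensitive to the weights, whereas the real work lies in re-checking the hypotheses of Theorem~\ref{Thm: simple spectrum}.

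First I would observe that Theorem~\ref{Thm_p_tends_to_zero} extends almost verbatim. Its only structural input is that the Laplacian annihilates the indicator of each connected component. For the weighted Laplacian $L=D-W$ this is still automatic, because every row of $L$ sums to zero by construction ($L_{ii}=D_i=\sum_j w_{ij}$ and $L_{ij}=-w_{ij}$), so the indicator vector $\mathbf{1}_C$ of any component $C$ lies in $\ker L$, and distinct components give linearly independent kernel vectors. Hence a disconnected weighted graph has $0$ as an eigenvalue of geometric multiplicity at least two and is non-shift-enabled by Lemma~\ref{Lem_undirected_shift}; notably this argument never uses the magnitudes or even the signs of the weights. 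Corollary~\ref{Cor：non shift for small edges} is then immediate, since ``a connected graph on $N$ vertices needs at least $N-1$ edges'' is purely combinatorial. Together these give Region~1, i.e. $p=0$ once $M\le N-2$.

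The substance is in verifying that Theorem~\ref{Thm: simple spectrum} still applies, i.e. that the off-diagonal entries $L_{ij}=-w_{ij}$ ($i<j$) meet the conditions of Remark~\ref{remark:upper triangular}. Independence is inherited from the generating model, in which edges and their weights are drawn independently. For the non-trivial-distribution condition I would note that each off-diagonal entry is a mixture: an atom of mass $1-P$ at $0$ (absent edge) together with the absolutely continuous weight law (Exponential or Gaussian) of mass $P$. The largest atom therefore has mass $1-P$, so the entry is non-trivial with $\mu=P$ as soon as $P$ is bounded away from $0$. The decisive qualitative point, and the reason the weighted and unweighted pictures diverge, is that a continuous weight law contributes no atom even as $P\to 1$, so non-triviality never fails at the dense end; this is exactly why the symmetric Region~3 collapse of $p$ seen for unweighted graphs does not recur here, and why the present statement claims only Regions~1 and~2.

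The step I expect to be the main obstacle is the diagonal hypothesis of Remark~\ref{remark:upper triangular}(c). For the Laplacian the diagonal $L_{ii}=\sum_j w_{ij}$ is a deterministic function of the off-diagonal weights and is therefore not literally independent of them, in contrast to the identically-zero diagonal of the adjacency matrix on which that remark relies. I would handle this by arguing that the eigenvalue-repulsion mechanism underlying Theorem~\ref{Thm: simple spectrum} is driven by the independent off-diagonal entries and is robust to the diagonal being a fixed (here, linear) function of those entries; the technical crux is thus to place the weighted Laplacian within a version of the simple-spectrum result that tolerates such a functionally dependent diagonal. Once distinct eigenvalues are secured with probability at least $1-N^{-c}$, Lemma~\ref{Lem_undirected_shift} gives shift-enabledness, so $p\to 1$ for $N$ large and $M$ moderately large, which is precisely the Region~2 claim.
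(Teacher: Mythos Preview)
The paper does not actually supply a proof of Theorem~\ref{Thm:general law_weighted_graph}; it is stated as a direct consequence of the earlier results and left without argument. Your proposal is therefore not competing with any proof in the paper, and in fact your verification of the three ingredients is considerably more careful than anything the paper writes down. Your treatment of Theorem~\ref{Thm_p_tends_to_zero} and Corollary~\ref{Cor：non shift for small edges} via the row-sum-zero property of the weighted Laplacian is correct and is exactly the right observation: the kernel argument is weight-insensitive, so Region~1 carries over verbatim.

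Where your proposal goes beyond the paper is in the analysis of Theorem~\ref{Thm: simple spectrum}. Your mixture argument for non-triviality (atom of mass $1-P$ at $0$, continuous part of mass $P$, hence $\mu=P$) is the right way to see why the weighted off-diagonal entries remain non-trivial all the way up to $P=1$, and this is precisely the mechanism behind the absence of a Region~3 collapse, which the paper only asserts heuristically. The diagonal-independence issue you flag is genuine, and worth noting that the paper itself glosses over it even in the unweighted case: Remark~\ref{remark:upper triangular}(c) justifies independence by appealing to $x_{i,i}=0$, which is the adjacency diagonal, not the Laplacian diagonal $D_i=\sum_j w_{ij}$. So the obstacle you identify is not introduced by the weights; it is already present in the paper's unweighted argument and is simply inherited here. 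Your proposed resolution (invoke a variant of the simple-spectrum theorem that tolerates a diagonal which is a deterministic function of the off-diagonal entries) is the honest way to close the gap, and is more than the paper itself attempts.
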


In the last region, $p$ remains close $1$, which is different from the shift-enabled probability behaviour of the unweighted graph. This is due to the following theorem.
\begin{Thm} {\label{Thm: weighted graph with large edges}}
In the weighted graph,  the probability $p$ is close to $1$ and does not drop to zero for a large number of edges $M$.
\end{Thm}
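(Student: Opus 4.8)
The plan is to show that the mechanism responsible for $p\to 0$ in Region~3 of the unweighted case is destroyed once the edge weights are drawn from a continuous distribution. First I would recall why dense \emph{unweighted} graphs fail: as the number of edges approaches its maximum the edge-presence probability $P$ tends to $1$, the off-diagonal entries of the Laplacian become (almost) deterministic, the non-triviality hypothesis of Theorem~\ref{Thm: simple spectrum} is violated, and the highly degenerate spectrum of the complete graph then propagates through the complement identity of Theorem~\ref{Thm_complement_shift}, forcing repeated eigenvalues. The contrast I would stress is that for a \emph{weighted} graph this degeneracy never arises: even when every edge is present, the off-diagonal entries $L_{ij}=-w_{ij}$ remain continuous random variables, so the entries stay non-trivial in the sense of Remark~\ref{remark:upper triangular}(b) for \emph{all} edge densities, with a non-triviality parameter $\mu$ that can be chosen uniformly in $(0,1)$.

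Given this, one could try to invoke Theorem~\ref{Thm: simple spectrum} verbatim, since its non-triviality hypothesis now survives at every density. I would instead prefer a self-contained measure-theoretic argument, because for the Laplacian the diagonal entries (the weighted degrees) are deterministic functions of the off-diagonal weights and therefore clash with the independence required in Remark~\ref{remark:upper triangular}(c). The Laplacian always annihilates the all-ones vector, so it lives in the linear subspace $\mathcal{S}_0$ of symmetric matrices with zero row sums, of dimension $\binom{N}{2}$ and coordinatised by $\{w_{ij}\}_{i<j}$. When the weights have an absolutely continuous joint density (as in the exponential and Gaussian models), the induced law of $L$ is absolutely continuous on a full-dimensional region of $\mathcal{S}_0$. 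The locus of matrices in $\mathcal{S}_0$ with a repeated \emph{nonzero} eigenvalue is a proper algebraic subvariety of $\mathcal{S}_0$, hence of Lebesgue measure zero, so the nonzero eigenvalues of $L$ are almost surely distinct, uniformly over the density of the graph and up to the complete graph.

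To finish I would combine this with connectivity: for large $M$ the graph is connected with probability tending to $1$, so $0$ is a simple eigenvalue, and together with the previous step all $N$ eigenvalues are distinct with high probability; Lemma~\ref{Lem_undirected_shift} then gives that the graph is shift-enabled. Hence $p$ stays close to $1$ as $M$ grows and does not collapse at the complete-graph end, in contrast to the unweighted case. The hard part will be the measure-zero step — specifically, verifying that the repeated-eigenvalue locus is a \emph{proper} subvariety of the constrained space $\mathcal{S}_0$ rather than all of it, since the usual codimension-two count for unrestricted symmetric matrices must be re-examined after imposing the zero-row-sum constraint. I would dispatch this by exhibiting a single explicit weighted Laplacian with simple nonzero spectrum, which immediately certifies that the variety is proper and therefore null.
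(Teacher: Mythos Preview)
Your argument is sound and genuinely different from the paper's. The paper's own proof is a two-line heuristic: it observes that the Region~3 collapse in the unweighted case hinged on the complement identity of Theorem~\ref{Thm_complement_shift}, asserts that this identity no longer holds for weighted graphs, then notes that the (suitably defined) complement of a connected weighted graph remains connected and that Theorem~\ref{Thm: simple spectrum} therefore still applies---concluding that $p$ stays near~$1$. Your route bypasses Theorem~\ref{Thm: simple spectrum} altogether, which is a real gain: you correctly flag that the Laplacian diagonal is a deterministic function of the off-diagonal weights, so the independence hypothesis of Remark~\ref{remark:upper triangular}(c) is not literally met, a point the paper glosses over. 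Your measure-zero argument on the zero-row-sum subspace $\mathcal{S}_0$ is the right fix and in fact yields the stronger conclusion that, conditional on connectivity, the spectrum is simple almost surely rather than merely with high probability. One refinement you should make explicit: for $M<\binom{N}{2}$ the law of $L$ is supported on an $M$-dimensional coordinate subspace of $\mathcal{S}_0$ (the absent edges are fixed at zero), so the ``proper subvariety'' check must be done inside that smaller space; your exhibit-one-example device still works there, but the witness Laplacian must respect the given edge pattern, which is guaranteed once the underlying graph is connected.
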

\begin{proof}
The results in Section~\ref{sec: Region 3} do no longer hold for the weighted graph since Theorem~\ref{Thm_complement_shift} does not hold. Furthermore, the complements of weighted graph are generally connected as the original graph is connected (Complement of undirected weighted graph are defined in detail in Section 2.7.1 of Ref.~\cite{networks}), 
which implies that Theorem~\ref{Thm: simple spectrum} still holds for the complement graph. So, the probability $p$ remains close to $1$.
\end{proof}

Exponential distribution function $f(x)=\lambda e^{-\lambda x},(x>0)$ has one parameter $\lambda$. As Figure \ref{fig:erexpnode20mu100lap} shows for $\lambda=0.1, 1, 10, 100$, and the parameter  $\lambda$ has almost no effect on the shift-enabled properties.
\begin{figure}[htb]
	\centering
	\includegraphics[width=0.5\linewidth]{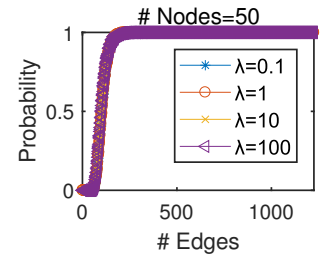}
	\caption{ER graphs with the weights defined using exponential distribution. The influence of parameter $\lambda$ on the shift-enabled probability $p$, for an ER weighted graph with $N$=50 nodes. The horizontal axis shows the number of edges $M$.
	}. 
	\label{fig:erexpnode20mu100lap}
\end{figure}

In Gaussian distribution, the distribution function $ f(x)={\frac {1}{\sigma {\sqrt {2\pi }}}}e^{-{\frac {1}{2}}\left({\frac {x-\mu }{\sigma }}\right)^{2}} $, in which $\mu$ and $\sigma>0$ are mean and standard deviation of random variable. The weight may be negative, and the graphs with negative weights are called sign graphs which is discussed in Section~\ref{section: signed graph}. As Figure \ref{fig:ergaussiannode20sigama10lap} shows, (for positive weights) the parameters of Gaussian distribution have almost no effect on the shift-enabled properties.
\begin{figure}[htb]
	\centering
	\subfigure[]{
	\includegraphics[width=0.48\linewidth]{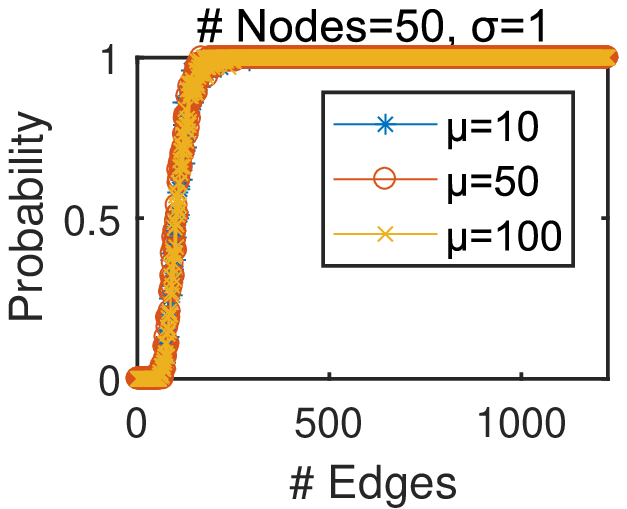}}
	\centering
	\subfigure[]{
	\includegraphics[width=0.48\linewidth]{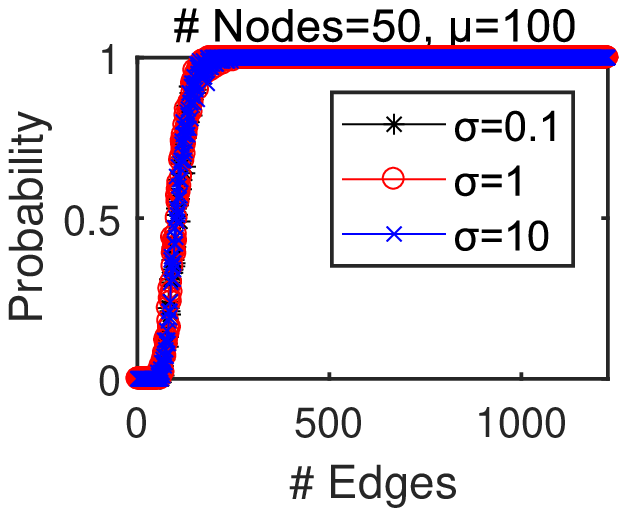}}
	\caption{ER graphs with the weights defined using Gaussian distribution. Probability $p$ vs the number of edges $M$ for three different values of (a) $\mu$ (b) $\sigma$.}
	\label{fig:ergaussiannode20sigama10lap}
\end{figure}

\subsection{Signed Graphs}{\label{section: signed graph}}
Up to now, only graphs with unweighted or positively weighted edges are discussed. Some applications, however, involve signed graphs in which case negative weights are admitted.
In these cases, the negative edges are usually used to measure the dissimilarity between nodes. 

Signed graphs are widely used in social networks, in which case person support/oppose each other, recommendation networks (likes/dislikes), and biological networks (promote and inhibit relationships between neurons)~\cite{cheung2018robust,parisien2008solving,Dittrich20200signedgraph}. This article studies the most simplest signed graphs with weights equal to $1$ or $-1$ introduced by Harary~\cite{harary1953notion} in 1953, to deal with social relations including  disliking, indifference, and liking. The results from the previous section, can easily be extended to signed graphs.

The Laplacian matrix of signed graphs is defined as $L_{sign}=D_{sign}-W$, where $D_{sign}=\sum_j|w_{ij}|$, i.e., the sum of the absolute weights of incident edges~\cite{Dittrich20200signedgraph,liu2020generalized}. 
Signed graphs have two categories: balanced and unbalanced graphs. 
A signed graph is balanced if the product of edge weights around each cycle is positive~\cite{harary1953notion}. Otherwise, the graph is unbalanced. 
Balanced and unbalanced signed graphs have the distinct shift-enabled properties. Next, the two cases are considered separately.

For balanced graphs, Laplacian matrix $L_{sign}$ has the following property.
\begin{Thm}[\cite{kunegis2011spectral}]{\label{Thm balance_spectral}}
	The eigenvalues of the Laplacian matrices of a balanced graph $\mathcal{G}_{sign}$ and the corresponding unsigned graph $\mathcal{G}$ 
	are identical. 
\end{Thm}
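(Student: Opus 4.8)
The plan is to exhibit an explicit orthogonal similarity between $L_{sign}$ and the unsigned Laplacian $L$, so that identical spectra follow at once. The bridge is Harary's structure theorem for balanced graphs, which I would invoke first: a signed graph is balanced (every cycle has a positive weight-product) if and only if its vertices admit a bipartition $V = V_1 \cup V_2$ such that every positive edge stays within one part and every negative edge crosses between the parts. Equivalently, there is a \emph{signature vector} $\sigma \in \{+1,-1\}^N$ (assign $+1$ to $V_1$ and $-1$ to $V_2$) for which $w_{ij} = \sigma_i \sigma_j\,|w_{ij}|$ on every edge.

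Given such a $\sigma$, I would define the diagonal \emph{switching matrix} $T = \mathrm{diag}(\sigma_1,\ldots,\sigma_N)$. Since each $\sigma_i = \pm 1$, we have $T^2 = I_N$, hence $T = T^{-1} = T^\top$ and $T$ is orthogonal. The core computation is then to verify $L_{sign} = T L T^{-1}$ entrywise. Recall $L = D - |W|$ and $L_{sign} = D_{sign} - W$, where the two degree matrices coincide because $D_i = D_{sign,i} = \sum_j |w_{ij}|$. On the diagonal, $(T L T)_{ii} = \sigma_i^2 L_{ii} = D_i = (L_{sign})_{ii}$; off the diagonal, $(T L T)_{ij} = \sigma_i \sigma_j L_{ij} = -\sigma_i \sigma_j |w_{ij}| = -w_{ij} = (L_{sign})_{ij}$, where the penultimate equality is precisely the balance condition on $\sigma$.

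Having established $L_{sign} = T L T^{-1}$, the conclusion is immediate: similar matrices share the same characteristic polynomial, and therefore the same eigenvalues with multiplicities. This yields the claimed identity of the two spectra.

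I expect the only nontrivial step to be the first one, namely justifying the passage from the cycle-based definition of balance to the existence of the signature vector $\sigma$. This is exactly Harary's balance theorem (\cite{harary1953notion}), which I would cite rather than reprove; a self-contained argument would fix a spanning tree, set $\sigma$ by propagating signs along tree paths from an arbitrary root, and then use the positive-cycle hypothesis to check that the value assigned to each vertex is independent of the chosen path, i.e.\ consistent around every fundamental cycle. Once $\sigma$ is in hand, the remainder is a one-line diagonal conjugation and requires no further work.
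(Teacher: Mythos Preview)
Your proof is correct and is precisely the standard switching argument for balanced signed graphs. Note, however, that the paper does not supply its own proof of this statement: Theorem~\ref{Thm balance_spectral} is quoted from \cite{kunegis2011spectral} as a known result and used without further justification. There is therefore nothing in the paper to compare your argument against.

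That said, your appeal to Harary's bipartition characterization is exactly what the paper records in the appendix as Lemma~\ref{Lem:balance_graph_condition}(3), so your proof slots cleanly into the paper's framework. The diagonal conjugation $L_{sign}=T L T^{-1}$ with $T=\mathrm{diag}(\sigma)$ is the argument one finds in \cite{kunegis2011spectral}, and your entrywise verification is complete.
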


Based on this theorem it is easy to show the following shift-enabled properties the signed graphs.

\begin{Thm}
  Balanced graphs and unbalanced graphs 
  have the same shift-enabled properties as the corresponding unsigned graphs and general random weighted graphs, respectively.
\end{Thm}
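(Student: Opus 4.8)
The plan is to split the statement into its two halves---balanced versus unbalanced---and in both cases reduce ``shift-enabled'' to the eigenvalue condition of Lemma~\ref{Lem_undirected_shift}: for a real symmetric shift matrix the graph is shift-enabled if and only if all its eigenvalues are distinct. The unifying tool I would use is switching equivalence, i.e.\ conjugation of $L_{sign}$ by a diagonal $\pm 1$ signature matrix, which preserves the spectrum and which sends a balanced signed graph to its underlying unsigned graph but cannot do so for an unbalanced one.

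For the balanced case the argument is essentially immediate from Theorem~\ref{Thm balance_spectral}. That theorem gives $Spec(L_{sign})=Spec(L)$ for a balanced graph and its corresponding unsigned graph, including multiplicities, so the eigenvalues of $L_{sign}$ are distinct exactly when those of $L$ are. By Lemma~\ref{Lem_undirected_shift} this means $\mathcal{G}_{sign}$ is shift-enabled if and only if the unsigned $\mathcal{G}$ is, and since the equivalence holds realisation-by-realisation, the shift-enabled probability $p$ as a function of $M$ coincides with that of the unsigned graph, reproducing all three regions of Section~\ref{sec: ER graph}, including the symmetric drop to $0$ in Region~3.

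For the unbalanced case I would argue that $L_{sign}=D_{sign}-W$ behaves, for the purposes of the spectral analysis, as a random weighted symmetric matrix, so that the weighted-graph conclusions of Theorem~\ref{Thm:general law_weighted_graph} and Theorem~\ref{Thm: weighted graph with large edges} apply. Regions~1 and~2 transfer as for weighted graphs: $p=0$ for small $M$ and $p\to 1$ once $M$ is moderately large, the latter via Theorem~\ref{Thm: simple spectrum} since the random $\pm 1$ off-diagonal entries are independent and non-trivially distributed. The decisive difference lies in Region~3: here the complement-symmetry mechanism is unavailable, because Theorem~\ref{Thm_complement_shift} rests on the clean relation $Spec(L_{\mathcal{G}^C})=N-Spec(L_{\mathcal{G}})$ of Lemma~\ref{Lem_spec_complement}, which holds only for $0/1$ adjacency and fails once signs are present. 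Equivalently, an unbalanced graph is \emph{not} switching-equivalent to its unsigned counterpart, so its spectrum is genuinely weighted-like rather than a reflection $N-\lambda$ of the unsigned spectrum, and no repeated eigenvalue is forced at large $M$; as in Theorem~\ref{Thm: weighted graph with large edges}, $p$ stays close to $1$.

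The hard part will be the Region~3 claim for unbalanced graphs. Noting that the complement argument fails only shows the \emph{absence} of a mechanism forcing $p$ to drop; to conclude that $p$ genuinely remains near $1$ I must verify that $L_{sign}$ still has distinct eigenvalues with high probability in the dense regime, which is where I would lean hardest on a careful application of Theorem~\ref{Thm: simple spectrum} to the randomly signed dense matrix. A secondary subtlety worth flagging is that, unlike the unsigned Laplacian, the unbalanced signed Laplacian need not have $0$ as an eigenvalue (its all-ones vector is not in the kernel once a vertex carries a negative edge), so the Region~1 connectivity argument of Theorem~\ref{Thm_p_tends_to_zero} should be matched to the weighted-graph version rather than applied verbatim.
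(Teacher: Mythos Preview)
Your proposal is correct and follows essentially the same route as the paper: for the balanced case you invoke Theorem~\ref{Thm balance_spectral} together with Lemma~\ref{Lem_undirected_shift}, and for the unbalanced case you reduce to Theorems~\ref{Thm:general law_weighted_graph} and~\ref{Thm: weighted graph with large edges}, exactly as the paper does. Your discussion is in fact more careful than the paper's own two-line argument---the switching-equivalence framing and the subtleties you flag about Region~3 and about the kernel of the unbalanced signed Laplacian in Region~1 go beyond what the paper actually supplies.
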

\begin{proof}
First of all, based on Lemma~\ref{Lem_undirected_shift} and  Theorem~\ref{Thm balance_spectral}, it is obvious that a balanced signed graph has the same shift-enabled properties as the corresponding unsigned graph. In the considered case, the signed graph has weights $1$ and $-1$. So, its shift-enabled properties are in accordance with the corresponding unweighted graph (see Figure~\ref{fig:balance_unbalance}(a) for a simulation example).

For unbalanced graphs, it is easy to show that Theorem~\ref{Thm:general law_weighted_graph} and Theorem~\ref{Thm: weighted graph with large edges} still hold, so its shift-enabled properties are in accordance with those of random weighted graphs (see Figure~\ref{fig:balance_unbalance}(b)).
\end{proof}

\begin{Rem}
Random balanced graphs and unbalanced graphs are generated differently. 
A usual way to generate a random unbalanced graph, is to start from  an ER graph, and then randomly replace all weights that are 1 by -1, i.e., to convert the unsigned ER graph into a signed. Because balanced graphs require that the product of edge weights around each cycle is positive, which is difficult to satisfy when the number of edges is high, this method is not practical for generating balanced graphs. 


 According to Lemma~\ref{Lem:balance_graph_condition}(3), a balanced signed graph can be generated in the following way. Firstly, divide randomly the vertices into two sets $V_1$ and $V_2$; then the weights of edges connecting two vertices that are both in $V_1$ or both in $V_2$ are set to $1$; edges connecting one vertex from $V_1$ with a vertex to $V_2$ are set to $-1$. 
\end{Rem}

\begin{figure}[htb]
	\centering
	\subfigure[]{
		\includegraphics[width=1.5in]{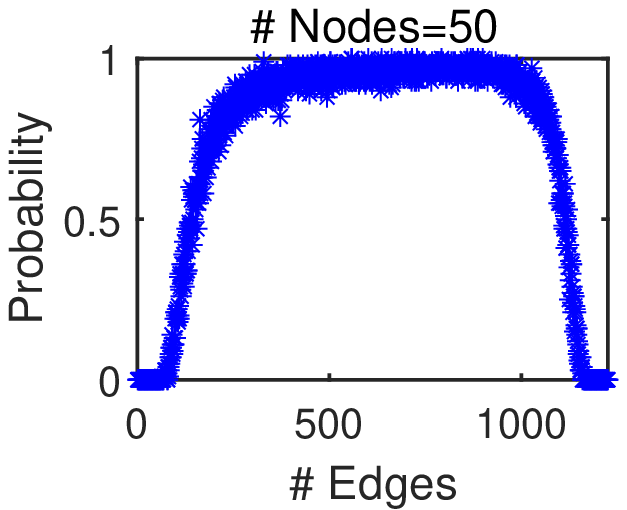}}
	\subfigure[]{
		\includegraphics[width=1.5in]{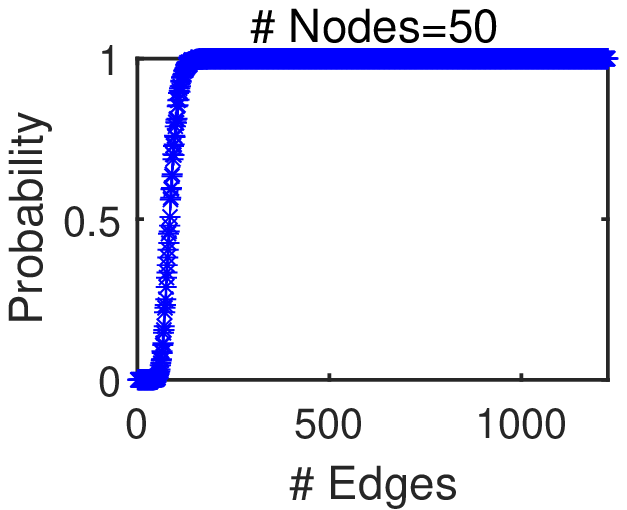}}
	\caption{The shift-enabled properties of signed  graph.
	Balanced graph has the same shift-enabled properties as the corresponding unsigned  graph.
	Unbalanced graph`s shift-enabled properties are in accordance with those of general weighted  graphs. $p$ vs. $M$, with $N$=50 nodes, for a signed
		 (a) Balanced; (b) Unbalanced  graph.}
	\label{fig:balance_unbalance}
\end{figure}



\section{Conclusion}
\label{sec:discussion}
Recognising the importance of shift-enable property, the paper discusses when a random graph is shift-enabled. The behaviour of the shift-enabled property for weighted and unweighted graphs was discussed and the the design guidelines that ensure the shift-enable property were given. 
A future direction is to extend the findings to directed graphs and how to transform shift matrix when the shift-enabled condition for $S$ is not satisfied.
It is also interesting to study if one may decompose non-shift-enabled graphs into shift-enabled subgraphs, to optimize the design of the GSP filters.

\appendices

\section{}
It is easily determined whether a graph is shift-enabled by the following lemma.

\begin{Lem}{\label{Lem_undirected_shift}}
	If shift matrix $L$ is a real symmetric matrix, then $L$ is shift-enabled, if and only if all eigenvalues of $L$ are distinct {\rm\cite{ortega2018graph,Liyanchen2021}}.
\end{Lem}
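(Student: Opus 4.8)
The plan is to derive the lemma from the Jordan-block characterization recorded in the remark following Definition~\ref{def:shift_enable} together with the spectral theorem for real symmetric matrices. The key structural fact I would invoke first is that every real symmetric matrix $L$ is orthogonally diagonalizable; consequently its Jordan normal form is purely diagonal, i.e.\ every Jordan block has size $1\times 1$. With this in hand, the minimum polynomial factors as $p_L(\lambda)=\prod_{k}(\lambda-\mu_k)$, where $\mu_1,\dots,\mu_r$ are the \emph{distinct} eigenvalues of $L$ (each appearing to the first power, since no block exceeds size one), whereas the characteristic polynomial factors as $m_L(\lambda)=\prod_{k}(\lambda-\mu_k)^{a_k}$, with $a_k\ge 1$ the algebraic multiplicity of $\mu_k$.

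Given these two factorizations, the equivalence follows by comparing degrees. For the ``if'' direction, suppose all eigenvalues are distinct, so $r=N$ and every $a_k=1$; then $p_L$ and $m_L$ share the same factors and hence $p_L=m_L$, which is exactly the shift-enabled condition of Definition~\ref{def:shift_enable}. For the ``only if'' direction I would argue by contraposition: if some eigenvalue is repeated, then the corresponding $a_k\ge 2$, so $\deg p_L=r<N=\deg m_L$ and therefore $p_L\ne m_L$, i.e.\ $L$ is non-shift-enabled. Equivalently, one can phrase this via the remark directly: diagonalizability splits a repeated eigenvalue into two or more distinct $1\times1$ Jordan blocks sharing the same eigenvalue, which violates the ``each Jordan block has a distinct eigenvalue'' criterion.

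I do not expect a genuine obstacle here, since the statement is essentially a specialization of the general Jordan-form criterion (Remark~1) to the diagonalizable case. The only point that warrants care is the bookkeeping translating ``distinct eigenvalues'' into the exponents of the two polynomials: one must be explicit that for a symmetric (hence diagonalizable) matrix the exponent of each factor in $p_L$ is exactly one, so that the equality $p_L=m_L$ can hold if and only if each algebraic multiplicity $a_k$ is also one. The spectral theorem does all the real work, and the remainder is a direct degree-and-factor comparison between $p_L$ and $m_L$.
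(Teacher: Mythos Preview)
Your proposal is correct and follows precisely the route the paper takes: the paper does not give a standalone proof of Lemma~\ref{Lem_undirected_shift} (it is cited from \cite{ortega2018graph,Liyanchen2021}), but the argument is sketched in Remark~1, namely the Jordan-block criterion $p_S=m_S \Leftrightarrow$ each Jordan block has a distinct eigenvalue, combined with the fact that a real symmetric matrix is diagonalizable so every Jordan block is $1\times 1$. Your degree-comparison of $p_L$ and $m_L$ is simply an explicit unpacking of that same observation, so there is no substantive difference in approach.
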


Lemma~\ref{Lem_undirected_shift} indicates that an undirected graph, which shift matrix is real symmetric, is shift-enabled if and only if its eigenvalues are all distinct.
\begin{Lem}{\label{Lem_spec_complement}}{\rm{\cite{Merris1994Laplacian}}}
	Assume 
	\begin{equation*}
	Spec(L_{\mathcal{G}})=(\lambda_1, \lambda_2, \cdots, \lambda_{N-1},\lambda_N=0),
	\end{equation*}
	then
	\begin{equation*}
	Spec(L_{\mathcal{G^C}})=(N-\lambda_1, N-\lambda_2,  \cdots,N-\lambda_{N-1}, 0).
	\end{equation*} 
\end{Lem}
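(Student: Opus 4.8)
The plan is to reduce the statement to a single matrix identity relating $L_{\mathcal{G}^C}$ to $L_{\mathcal{G}}$, and then to exploit the fact that both Laplacians share the all-ones vector as a common eigenvector. First I would write $A$, $A^C$ for the (unweighted) adjacency matrices of $\mathcal{G}$ and $\mathcal{G}^C$, and observe that since every pair of distinct vertices is joined in exactly one of $\mathcal{G}$ and $\mathcal{G}^C$, we have $A + A^C = J - I_N$, where $J$ is the $N\times N$ all-ones matrix. Equivalently, the degree of each vertex satisfies $d_i^C = (N-1) - d_i$, so the degree matrices obey $D^C = (N-1)I_N - D$.

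Combining these, I would compute
\begin{equation*}
L_{\mathcal{G}^C} = D^C - A^C = \big[(N-1)I_N - D\big] - \big[J - I_N - A\big] = N I_N - J - (D - A) = N I_N - J - L_{\mathcal{G}}.
\end{equation*}
This identity is the crux: it expresses $L_{\mathcal{G}^C}$ as a linear combination of $I_N$, $J$, and $L_{\mathcal{G}}$.

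Next I would diagonalise. The all-ones vector $\mathbf{1}$ lies in the kernel of every Laplacian (row sums vanish), so $L_{\mathcal{G}}\mathbf{1}=0$ and $\mathbf{1}/\sqrt{N}$ is a unit eigenvector for the listed eigenvalue $\lambda_N = 0$. Because $L_{\mathcal{G}}$ is real symmetric, its remaining eigenvectors $u_1,\dots,u_{N-1}$ (for $\lambda_1,\dots,\lambda_{N-1}$) can be chosen orthonormal and orthogonal to $\mathbf{1}$. Since $J = \mathbf{1}\mathbf{1}^{\mathsf T}$, we get $J u_i = \mathbf{1}(\mathbf{1}^{\mathsf T} u_i) = 0$ for $i \leq N-1$, whereas $J\mathbf{1} = N\mathbf{1}$. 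Applying the identity to each basis vector then gives $L_{\mathcal{G}^C}u_i = (N I_N - J - L_{\mathcal{G}})u_i = (N-\lambda_i)u_i$ for $i\leq N-1$, and $L_{\mathcal{G}^C}\mathbf{1} = (N - N - 0)\mathbf{1} = 0$. Hence the spectrum of $L_{\mathcal{G}^C}$ is exactly $(N-\lambda_1,\dots,N-\lambda_{N-1},0)$, as claimed.

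I do not expect any serious obstacle here; the argument is essentially the single identity plus the common eigenbasis. The only point requiring a little care is the treatment of the shared null eigenvector: one must verify that $\mathbf{1}$ is an eigenvector of both Laplacians and that the remaining eigenvectors may be taken orthogonal to it. This is automatic because $L_{\mathcal{G}}$ and $J$ commute (indeed $L_{\mathcal{G}}J = J L_{\mathcal{G}} = 0$ by symmetry of $L_{\mathcal{G}}$), so they are simultaneously diagonalisable and a common orthonormal eigenbasis containing $\mathbf{1}/\sqrt{N}$ exists even when $\mathcal{G}$ is disconnected and $0$ appears among $\lambda_1,\dots,\lambda_{N-1}$ with higher multiplicity.
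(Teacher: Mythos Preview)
Your proposal is correct and follows essentially the same route as the paper: derive the identity $L_{\mathcal{G}^C}=NI_N-J-L_{\mathcal{G}}$ (the paper writes it as $L_{\mathcal{G}}+L_{\mathcal{G}^C}=L_{K_N}=NI_N-J_N$) and then evaluate it on an eigenbasis of $L_{\mathcal{G}}$ containing $\mathbf{1}$. Your commutativity remark ($L_{\mathcal{G}}J=JL_{\mathcal{G}}=0$) is a slight improvement, since the paper's argument tacitly assumes $\lambda_i\neq 0$ for $i<N$ to conclude orthogonality to $\mathbf{1}$, whereas yours covers the disconnected case as well.
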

\begin{proof}
Since $\lambda_i \neq 0$ for $i=1,\cdots, N-1$,
$u=[1,1,\cdots,1]^T$ is the only eigenvector of $L_{\mathcal{G}}$ with eigenvalue $0$. Let $x_1,x_2,\cdots,x_{N-1}$ be the remaining eigenvectors of $L_{\mathcal{G}}$. As $L_{\mathcal{G}}$ is symmetric and $\lambda_N \neq \lambda_i$ for $i\neq N$, $u$ is orthogonal to $x_1,x_2,\cdots,x_{N-1}$.

	Let $L_{K_N}$ be the Laplacian matrix of a complete graph with $N$ vertices. It is easily seen that
	\begin{equation}{\label{Equ_Laplacain_complement}}
	L_{\mathcal{G}}+L_{\mathcal{G}^C}=L_{K_n}=NI_{N}-J_{N}, 
	\end{equation}
	where $I_N$ and $J_{N}$ are identity matrix and all-1s matrix with $N$ vertices, respectively. 

	
Therefore, according to Equation (\ref{Equ_Laplacain_complement}), 
	\begin{equation*}
	L_{\mathcal{G}^C}x_i=(NI_N-J_N-L_{\mathcal{G}})x_i=Nx_i-J_Nx_i-{\lambda}_i x_i=(N-{\lambda}_i)x_i,
	\end{equation*} 
	where $J_N x_i = 0$ as $x_i$ is orthogonal to $u$. Thus,  
	$(N-{\lambda}_i) \in Spec(L_{\mathcal{G}^C})$ for $i=1,\cdots, N-1$.
	
		Finally, $0 \in Spec(L_{\mathcal{G^C}})$ as $u=[1,1,\cdots,1]^T$ is still an eigenvector of $L_{\mathcal{G}^C}$ with eigenvalue $0$. 
Hence,
	$Spec(L_{\mathcal{G^C}})=(N-\lambda_1, N-\lambda_2,  \cdots,N-\lambda_{N-1} , 0)$.
\end{proof}
\begin{Lem}{\label{Lem_complement_connect}}{\rm{\cite{Merris1994Laplacian}}}
	If $N$ is not the eigenvalue of $L_{\mathcal{G}}$, 
	then $\mathcal{G}^C$ is a connected graph.
\end{Lem}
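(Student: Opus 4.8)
The plan is to characterise connectivity of $\mathcal{G}^C$ through the multiplicity of the zero eigenvalue of its Laplacian, and then transfer the question to $L_{\mathcal{G}}$ via the complement identity already exploited in Lemma~\ref{Lem_spec_complement}. I would start from the standard spectral criterion that a graph is connected if and only if $0$ is a simple eigenvalue of its Laplacian (more precisely, the multiplicity of $0$ equals the number of connected components). Hence it suffices to show that, under the hypothesis $N \notin Spec(L_{\mathcal{G}})$, the eigenvalue $0$ of $L_{\mathcal{G}^C}$ has multiplicity exactly one.

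First I would invoke the identity $L_{\mathcal{G}} + L_{\mathcal{G}^C} = N I_N - J_N$ from Equation~(\ref{Equ_Laplacain_complement}). The all-ones vector $u = [1,\cdots,1]^T$ is an eigenvector of every Laplacian with eigenvalue $0$, so both $L_{\mathcal{G}}$ and $L_{\mathcal{G}^C}$ leave the invariant subspace $u^\perp$ fixed. On $u^\perp$ one has $J_N x = 0$, so the identity restricts to $L_{\mathcal{G}^C} = N I_N - L_{\mathcal{G}}$ there. Consequently, a vector $x \in u^\perp$ satisfies $L_{\mathcal{G}^C} x = 0$ if and only if $L_{\mathcal{G}} x = N x$; that is, the kernel of $L_{\mathcal{G}^C}$ inside $u^\perp$ coincides with the $N$-eigenspace of $L_{\mathcal{G}}$ inside $u^\perp$.

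To finish, I would count the multiplicity of $0$ in $L_{\mathcal{G}^C}$: it receives exactly one contribution from $u$ itself, plus the dimension of the $N$-eigenspace of $L_{\mathcal{G}}$ restricted to $u^\perp$. Because $u$ carries eigenvalue $0 \neq N$, the entire $N$-eigenspace of $L_{\mathcal{G}}$ lies in $u^\perp$, so this dimension equals the multiplicity of $N$ as an eigenvalue of $L_{\mathcal{G}}$. Under the hypothesis $N \notin Spec(L_{\mathcal{G}})$ this multiplicity is $0$, whence $0$ is a simple eigenvalue of $L_{\mathcal{G}^C}$ and $\mathcal{G}^C$ is connected.

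The argument is essentially a bookkeeping of eigenspaces, so the main subtlety, rather than a genuine obstacle, is keeping the contribution of $u$ separate from the eigenspaces living in $u^\perp$ and justifying that the $N$-eigenspace avoids $u$. Note that this reasoning nowhere assumes $\mathcal{G}$ itself is connected, so the lemma holds in full generality; if one is content to assume $\mathcal{G}$ connected, the conclusion also follows at once from the explicit spectrum $Spec(L_{\mathcal{G}^C}) = (N-\lambda_1, \cdots, N-\lambda_{N-1}, 0)$ of Lemma~\ref{Lem_spec_complement}, since a second zero there would force some $\lambda_i = N$, contradicting the hypothesis.
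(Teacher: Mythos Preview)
Your argument is correct. The paper itself does not supply a proof of this lemma---it is simply quoted from \cite{Merris1994Laplacian}---so there is no in-paper proof to compare against directly. That said, your route is exactly the natural one given the surrounding material: you reuse the identity $L_{\mathcal{G}}+L_{\mathcal{G}^C}=NI_N-J_N$ from the proof of Lemma~\ref{Lem_spec_complement} and the standard fact that connectivity is equivalent to simplicity of the zero eigenvalue. Your closing remark is also on point: in the paper's context, where Lemma~\ref{Lem_spec_complement} is stated (implicitly) for connected $\mathcal{G}$, the lemma is an immediate corollary of the explicit spectrum $Spec(L_{\mathcal{G}^C})=(N-\lambda_1,\ldots,N-\lambda_{N-1},0)$, since a repeated zero would force some $\lambda_i=N$. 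Your more careful eigenspace bookkeeping on $u^\perp$ is a genuine improvement in that it does not require $\mathcal{G}$ to be connected.
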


\begin{Lem}\label{Lem:balance_graph_condition}
	%
	Let $\mathcal{G}$ be a connected signed graph, i.e., a graph whose non-zero edge weights can take positive and negative values. 
	The following conditions are equivalent~\cite{zaslavsky1982signed,hou2003laplacian,Dittrich20200signedgraph}:
	\begin{itemize}
		\item [(1)] $\mathcal{G}$ is balanced. 
		\item [(2)] $L$ is positive definite, i.e., all the eigenvalues of $L$ are positive.
		\item [(3)] The vertices of $\mathcal{G}$ can be divided into two subsets: $V_1$ and $V_2$, such that all edges connecting vertices that are both in $V_1$ and all edges connecting the vertices that are both in $V_2$ are $+1$, whereas the edges connecting a vertex in $V_1$ with a vertex in $V_2$ are $-1$.
	\end{itemize}
\end{Lem}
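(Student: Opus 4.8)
The plan is to prove the three conditions equivalent by establishing the cycle $(1)\Rightarrow(3)\Rightarrow(2)\Rightarrow(1)$, using the signed-Laplacian quadratic form as the bridge between the combinatorial statements $(1),(3)$ and the spectral statement $(2)$. Expanding $x^\top L x$ edge by edge for $L=D_{sign}-W$ with $D_{sign,ii}=\sum_j|w_{ij}|$ yields the identity
\[
x^\top L x=\sum_{\substack{(i,j)\in E\\ w_{ij}>0}}|w_{ij}|\,(x_i-x_j)^2+\sum_{\substack{(i,j)\in E\\ w_{ij}<0}}|w_{ij}|\,(x_i+x_j)^2,
\]
from which $L\succeq 0$ is immediate, and from which $x^\top L x=0$ forces $x_i=x_j$ across every positive edge and $x_i=-x_j$ across every negative edge. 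This identity is the workhorse for the whole argument.

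First I would prove $(1)\Rightarrow(3)$ via Harary's balance theorem. Fixing a root $r$ in the connected graph, I define a vertex labelling $\sigma\colon V\to\{+1,-1\}$ by $\sigma(r)=+1$ and $\sigma(i)$ equal to the product of edge signs along any $r$-to-$i$ path. Balance---positivity of the sign product around every cycle---is exactly the hypothesis that makes $\sigma$ path-independent, hence well-defined on the connected graph. Taking $V_1=\sigma^{-1}(+1)$ and $V_2=\sigma^{-1}(-1)$ produces the partition of $(3)$: every edge $(i,j)$ satisfies $\operatorname{sgn}(w_{ij})=\sigma(i)\sigma(j)$, so positive edges lie inside a single part and negative edges cross. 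The only real content is verifying well-definedness from the cycle condition.

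Next, $(3)\Rightarrow(2)$ follows from the quadratic form: the vector $x^\star$ equal to $+1$ on $V_1$ and $-1$ on $V_2$ makes every bracket above vanish, so $x^\star\in\ker L$ and hence $0\in\operatorname{Spec}(L)$; combined with $L\succeq 0$ this pins down the spectral behaviour of a balanced graph, in agreement with Theorem~\ref{Thm balance_spectral}, which equates this spectrum with that of the unsigned Laplacian and its guaranteed zero eigenvalue. For $(2)\Rightarrow(1)$ I would argue through connectivity: a vector attaining $x^\top L x=0$ must, by the identity, be a consistent $\pm1$ labelling, and such a labelling forces every cycle to carry a positive sign product, i.e. balance; conversely an unbalanced graph admits no such null vector, so its least eigenvalue is strictly positive. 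Feeding the resulting spectral characterisation back through Lemma~\ref{Lem_undirected_shift} then ties the equivalence to the shift-enabled analysis elsewhere in the paper.

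The step I expect to be the main obstacle is pinning down the spectral condition $(2)$ precisely and matching it to Theorem~\ref{Thm balance_spectral}. The quadratic-form computation shows that balance is exactly the regime in which $L$ acquires the zero eigenvalue furnished by $x^\star$, so the characterisation to be established is naturally phrased as ``$L\succeq 0$ with $0\in\operatorname{Spec}(L)$'' (equivalently, $L$ singular), with strict positive-definiteness arising precisely in the \emph{unbalanced} case. I would therefore develop $(2)$ in this semidefinite-plus-null-vector form and use it---rather than a naive sign-distinctness argument---to close the cycle, taking care that the direction of the spectral implication is stated consistently with the zero eigenvalue produced by the balanced bipartition.
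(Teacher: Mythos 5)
The paper does not prove this lemma at all---it is stated with citations to the signed-graph literature---so your self-contained argument already supplies more than the paper does, and its skeleton (the edge-wise quadratic-form identity for the signed Laplacian, Harary's path-sign labelling for $(1)\Rightarrow(3)$, the $\pm1$ bipartition vector as a null vector for $(3)\Rightarrow(2)$, and the connectivity argument recovering balance from any null vector) is exactly the standard proof in the references the paper cites. More importantly, you caught a genuine error in the statement itself: condition (2) as printed (``all eigenvalues of $L$ are positive'') is not equivalent to balance; for connected signed graphs it characterises the \emph{opposite} case. Your identity shows the bipartition vector $x^\star$ of a balanced graph satisfies $(x^\star)^\top L x^\star=0$, so a balanced connected signed graph always has $0$ as a Laplacian eigenvalue, while it is the \emph{unbalanced} connected signed graphs whose signed Laplacian is positive definite. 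The paper's own Theorem~\ref{Thm balance_spectral} forces the same conclusion, since the spectrum of a balanced signed Laplacian equals that of the corresponding unsigned Laplacian, which always contains $0$. The correct reading of (2) is therefore ``$L$ is positive semidefinite and singular,'' i.e.\ $0$ is an eigenvalue of $L$, and this is precisely the corrected condition your cycle $(1)\Rightarrow(3)\Rightarrow(2')\Rightarrow(1)$ establishes. Your three implications are sound for that corrected statement; when writing it up, state the correction explicitly rather than proving the lemma as printed, since as printed it is false (any balanced graph, e.g.\ an all-positive one, violates $(1)\Rightarrow(2)$).
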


\section*{Acknowledgment}
The authors thank Bochao Zhao and Kanghang He for helpful discussions.
This research was supported by the European Union's Horizon 2020 research and innovation programme under the Marie Sklodowska-Curie grant agreement  number 734331, the National Key Research and Development Project under grant 2017YFE0119300 and general scientific research projects of Zhejiang Education Department in 2020 under grant Y202044676.

\ifCLASSOPTIONcaptionsoff
  \newpage
\fi




\bibliographystyle{IEEEtran}
\bibliography{reference}

\end{document}